\setlist[itemize]{leftmargin=*}
\newcommand{\etal}{\textit{et al.}\xspace}
\newcommand{\ie}{\textit{i.e.}\xspace}
\newcommand{\eg}{\textit{e.g.}\xspace}
\newcommand{\etc}{\textit{etc.}\xspace}
\newcommand{\mypara}[1]{\noindent\textbf{#1.} \xspace}
\renewcommand{\Pr}[1]{\ensuremath{\mathsf{Pr}\left[#1\right]}\xspace}
\newtheorem{theorem}{Theorem}
\newtheorem{lemma}[theorem]{Lemma}
\newtheorem{definition}{Definition}
\definecolor{revision}{RGB}{0,0,255}
\newcommand{\revstart}{\begin{color}{revision}}
\newcommand{\revend}{~\!\!\end{color}}
\newcommand{\method}{\ensuremath{\mathsf{PrivATE}}\xspace}
\newcommand{\syn}{\ensuremath{\mathsf{PrivSyn}}\xspace}
\newcommand{\aim}{\ensuremath{\mathsf{AIM}}\xspace}
\newcommand{\ipw}{\ensuremath{\mathsf{IPW}}-{\ensuremath{\mathsf{PP}}}\xspace}
\newcommand{\smdp}{\ensuremath{\mathsf{SmoothDPM}}\xspace}
\newcommand{\dpci}{\ensuremath{\mathsf{DPCI}}\xspace}
\patchcmd{\hyper@makecurrent}{%
    \ifx\Hy@param\Hy@chapterstring
        \let\Hy@param\Hy@chapapp
    \fi
}{%
    \iftoggle{inappendix}{%
        \@checkappendixparam{chapter}%
        \@checkappendixparam{section}%
        \@checkappendixparam{subsection}%
        \@checkappendixparam{subsubsection}%
        \@checkappendixparam{paragraph}%
        \@checkappendixparam{subparagraph}%
    }{}%
}{}{\errmessage{failed to patch}}
\newcommand*{\@checkappendixparam}[1]{%
    \def\@checkappendixparamtmp{#1}%
    \ifx\Hy@param\@checkappendixparamtmp
        \let\Hy@param\Hy@appendixstring
    \fi
}
\apptocmd{\appendix}{\toggletrue{inappendix}}{}{\errmessage{failed to patch}}
\begin{document}
\title{\method: Differentially Private Average Treatment Effect Estimation for Observational Data}

\author{
}

\author{\IEEEauthorblockN{Quan Yuan\IEEEauthorrefmark{1}\IEEEauthorrefmark{2}\thanks{\IEEEauthorrefmark{2}Quan Yuan’s work on this paper was done while working as a visiting student at the
University of Virginia.},
Xiaochen Li\IEEEauthorrefmark{4},
Linkang Du\IEEEauthorrefmark{5}, 
Min Chen\IEEEauthorrefmark{6},
Mingyang Sun\IEEEauthorrefmark{7}, \\
Yunjun Gao\IEEEauthorrefmark{1}, 
Shibo He\IEEEauthorrefmark{1},
Jiming Chen\IEEEauthorrefmark{1}\IEEEauthorrefmark{8},
Zhikun Zhang\IEEEauthorrefmark{1}\IEEEauthorrefmark{3}\thanks{\IEEEauthorrefmark{3}Zhikun Zhang is the corresponding author.}}

\IEEEauthorblockA{\IEEEauthorrefmark{1}Zhejiang University, \IEEEauthorrefmark{2}University of Virginia,
\IEEEauthorrefmark{4}University of North Carolina at Greensboro,
\IEEEauthorrefmark{5}Xi’an Jiaotong University, \\
\IEEEauthorrefmark{6}Vrije Universiteit Amsterdam, 
\IEEEauthorrefmark{7}Peking University,
\IEEEauthorrefmark{8}Hangzhou Dianzi University\\ 
Email: 
\IEEEauthorrefmark{1}\{yq21, gaoyj, s18he, cjm, zhikun\}@zju.edu.cn, \IEEEauthorrefmark{4}X\_LI12@uncg.edu, 
\IEEEauthorrefmark{5}linkangd@gmail.com, \\ 
\IEEEauthorrefmark{6}m.chen2@vu.nl,
\IEEEauthorrefmark{7}smy@pku.edu.cn
}}

\IEEEoverridecommandlockouts
\makeatletter\def\@IEEEpubidpullup{6.5\baselineskip}\makeatother
\IEEEpubid{\parbox{\columnwidth}{
		Network and Distributed System Security (NDSS) Symposium 2026\\
		23 - 27 February 2026 , San Diego, CA, USA\\
		ISBN 979-8-9919276-8-0\\  
		https://dx.doi.org/10.14722/ndss.2026.241350\\
		www.ndss-symposium.org
}
\hspace{\columnsep}\makebox[\columnwidth]{}}

\maketitle

\begin{abstract}
Causal inference plays a crucial role in scientific research across multiple disciplines.
Estimating causal effects, particularly the average treatment effect (ATE), from observational data has garnered significant attention. 
However, computing the ATE from real-world observational data poses substantial privacy risks to users. 
Differential privacy, which offers strict theoretical guarantees, has emerged as a standard approach for privacy-preserving data analysis.
However, existing differentially private ATE estimation works rely on specific assumptions, provide limited privacy protection, 
or fail to offer comprehensive information protection.

To this end, we introduce~\method, a practical ATE estimation framework that ensures 
differential privacy.
In fact, various scenarios require varying levels of privacy protection.
For example,
only test scores are generally sensitive information in education evaluation,
while all types of medical record data are usually private.
To accommodate different privacy requirements, 
we design two levels (\ie, label-level and sample-level) of privacy protection in~\method.
By deriving an adaptive matching limit, \method effectively balances noise-induced error and matching error, leading to a more accurate estimate of ATE. 
Our evaluation validates the effectiveness of \method.
\method outperforms the baselines on all datasets and privacy budgets.
\end{abstract}

\IEEEpeerreviewmaketitle

\section{Introduction}
\label{sec:introduction}

Causal inference, the process of determining a causal relationship by analyzing the conditions under which an effect occurs, has been a fundamental research focus for decades in various fields~\cite{brand2023recent}, including healthcare~\cite{glass2013causal},
economics~\cite{varian2016causal}, statistics~\cite{pearl2003statistics}, public policy~\cite{matthay2022causal}, education~\cite{cordero2018causal}, \etc
A common example of causal inference is evaluating the impact of taking a drug by analyzing patient data, which can assist doctors in making informed decisions.
There are two typical settings for causal inference: randomized controlled trials (RCTs) and observational studies.
In RCTs, the treatment assignment is controlled by random assignment,
\eg, all patients are randomly assigned to two groups: One group takes the drug, while the other does not.
However, randomized trials are frequently impractical due to ethical, technical, or economic limitations in contexts like studying smoking behavior or assessing economic policies.
In contrast, observational studies (\ie, not intervening in individual grouping, only observing and analyzing naturally occurring
data) are more practical~\cite{shalit2017estimating},
\eg, we can only analyze existing patient data but have no control over whether a patient takes the medicine. 
The setting of observational studies has gained increasing attention due to the abundance of available data and the low budget requirement~\cite{yao2021survey}.

A key task in observational studies is to estimate the average treatment effect (ATE), which quantifies the overall impact of treatment across all samples.
Here, ATE is calculated as the mean of individual treatment effects, where an individual treatment effect is defined as the difference between a sample's potential outcome under treatment and its potential outcome without treatment.
ATE estimation in observational studies often suffers from selection bias and missing information~\cite{berrevoets2023impute}.
There may be significant differences in the characteristics of the treated and control groups. 
In addition, for people who take the drug, the effect of not taking it is unknown.
To mitigate the impact of bias and missing information,
a common solution is to estimate the counterfactual results of each sample and then calculate the causal effect, \eg, sample matching~\cite{rosenbaum1983central}.

However, directly computing ATE from real data using the above approach in observational studies poses significant risks of privacy leakage.
Data utilized in causal inference often contain sensitive personal information~\cite{kusner2016private}.
Direct manipulation and analysis of true data are increasingly challenged by  growing concerns over privacy and the
emergence of regulations for safeguarding individual data.
For example, releasing any statistical information derived from real medical data poses a risk of compromising patient privacy.
For a strong threat model, the attacker could perform a differential attack to infer whether the specific sample’s data is included in the dataset (\ie, comparing query results that include and exclude the sample).

Differential privacy (DP)~\cite{dwork2006calibrating}, a golden standard in the privacy community, has been widely applied for privacy-preserving data analysis~\cite{bittau2017prochlo,rogers2020linkedin,yuan2024privcpm,wang2021continuous}.
By injecting carefully designed noise into the aggregated statistical value, DP can ensure that a single user record has a limited impact on the final output.
Due to the advantages of quantifiable privacy guarantees, high flexibility, and low cost, DP has been deployed by many companies and government agencies~\cite{bittau2017prochlo}.
For instance,
LinkedIn builds Pinot~\cite{rogers2020linkedin}, a DP platform
that enables analysts to gain insight from its members' content engagements.
Although DP serves as an effective privacy protection strategy, within the context of ATE estimation in observational studies, only a small amount of literature has explored privacy-preserving solutions using DP~\cite{lee2019privacy,imbens2015causal,guha2025differentially,ohnishi2024differentially,koga2024differentially}. 

\mypara{Existing Solutions}
Existing studies exhibit several limitations in terms of 
the assumptions of the problem, the scope of protection, and the methodological implementation.
First, 
some approaches assume binary outcomes (\eg, a patient’s medication outcome is categorized as success or failure rather than represented by a continuous numerical indicator), which constrains their applicability.
Second, many solutions offer protection only for partial data attributes (\eg, safeguarding covariates such as patient age and height while leaving medication outcome unprotected). 
Third, most existing works address selection bias through sample reweighting~\cite{rosenbaum1987model},
which aims to
eliminate the distribution differences between
the treated and control groups. 
To ensure a bounded sensitivity, these methods typically employ a fixed, pre-defined truncation threshold to limit individual sample weights. 
However, such a fixed configuration inherently lacks flexibility and interpretability.
Therefore, it is challenging to design a more practical and highly flexible privacy-preserving ATE estimation framework in observational
studies.

\mypara{Our Proposal}
To overcome the limitations of the existing literature and eliminate data bias while effectively protecting user privacy,
we propose a matching-based framework \method 
to estimate the ATE for observational data in a private manner.
Compared to existing works, 
\method provides a more practical and general solution.
In particular, \method does not rely on idealized assumptions such as binary outcomes, which expands its application scenarios.
Furthermore,
considering the different trade-offs between utility and privacy in various scenarios, \method includes two levels of privacy protection: label-level and sample-level.
Label-level protection only perturbs the outcome, which offers higher utility. 
Sample-level protection perturbs all attributes, including treatment, covariates, and outcome, which provides the strongest privacy protection.
The two levels of protection provide solutions for different application scenarios: 
Label-level protection is suited for outcome-sensitive settings (\eg, education evaluation), where other information is publicly available. 
Sample-level protection is for cases involving fully sensitive data (\eg, medical study), where all attributes (including medical records, administered treatments, and outcomes) require protection.

In the causal effect estimation,
it is crucial to restrict the maximum number of matches for each sample, otherwise high sensitivity will occur.
In this way, there will be two types of error in the final ATE estimation: noise error and matching error.
However, it is challenging to choose a suitable matching limit for vairous datasets and privacy budgets.
Taking into account the combined influence of noise error and matching error,
we propose an adaptive matching limit determination mechanism to strike a balance between reducing global sensitivity and improving matching accuracy.
On this basis, we can calculate the counterfactual outcome for each sample in a more accurate manner.
Furthermore, we choose to perturb the sum of aggregated outcomes rather than individual outcomes to reduce the error in the ATE estimation.

\mypara{Evaluation}
We compare~\method with the baseline methods on multiple typical datasets, including real, semi-real, and synthetic datasets.
The experimental results show the superiority of~\method.
For instance,
for the sample-level privacy,
\method consistently outperforms other baseline methods across all datasets and budgets.
In addition, for the label-level privacy,
\method can achieve a low relative error (\ie, less than $0.2$) on multiple datasets even when the privacy budget is $0.5$.
We further verify the effectiveness of our proposed adaptive matching limit determination mechanism with a comparison to the fixed matching limit methods. 
We also explore the impact of the hyperparameter for matching limit calculation.
Moreover, we illustrate the influence of various privacy budget allocations on the ATE estimation.

\mypara{Contributions}
In summary, the main contributions of this paper are four-fold: 
\begin{itemize}[itemsep=2pt,topsep=2pt,parsep=0pt]
    \item We propose \method, a more practical and effective privacy-preserving ATE estimation framework under %
    differential privacy, outperforming existing works.

    \item 
    In~\method,
    we provide two levels of privacy protection (\ie, label-level and sample-level)
    to satisfy different trade-offs between utility and privacy in various scenarios.

    \item 
    We further design an adaptive matching limit determination mechanism to strike a balance between reducing global sensitivity and improving matching accuracy.

     \item We conduct extensive empirical experiments on multiple datasets to illustrate the effectiveness of \method.
     Under the same privacy settings, \method achieves superior performance compared to the baseline methods.
    \method is open-sourced at 
    \url{https://github.com/sec-priv/PrivATE}.

\end{itemize}

\section{Preliminaries}
\label{sec:preliminary}

\subsection{Causal Inference}

In general, a causal inference task estimates how the outcome would change if another treatment had been applied.
Due to the widespread availability of observational data,
estimating treatment effects from such naturally occurring datasets has garnered increasing attention.
Observational data typically includes a group of individuals who have received different treatments, their corresponding outcomes, and possibly additional information, but without direct access to the mechanism or reasons for taking the specific treatment~\cite{imbens2015causal}.

\begin{definition}[Treatment] 
Treatment $T$ represents the action that applies to a sample.
The group of samples with treatment $T=1$ is called the treated group, and the group of samples with $T=0$ is called the control group.
\end{definition}

\begin{definition}[Potential Outcome]
    For each unit-treatment pair, the outcome of that treatment when applied on that sample is the potential outcome.
    The potential outcome of treatment with value $t$ is denoted as $Y(T=t)$.
\end{definition}

\begin{definition}[Observed Outcome]
    The observed outcome is also called factual outcome (denoted as $Y^F$), which represents the outcome of the treatment that is actually applied. 
    $Y^F=Y(T=t)$ where $t$ is the treatment actually applied.
\end{definition}

\begin{definition}[Counterfactual Outcome]
    The counterfactual outcome $Y^{CF}$ is the outcome if the sample took another treatment.
    $Y^{CF}=Y(T=1-t)$ where $t$ is the treatment actually applied.
\end{definition}

\begin{definition}[Covariate]
     Covariate $X$ is the variable that is not affected by the treatment but still affects the experimental results.
\end{definition}

\mypara{Average Treatment Effect}
Average treatment effect (ATE) is defined as follows:
\begin{equation}
\label{eq:ATE_calculation}
    \tau = \mathbb{E}[Y(T=1)-Y(T=0)],
\end{equation}
where $Y(T = 1)$ and $Y(T = 0)$ are the potential treated and control outcomes of the whole population, respectively.

\mypara{Mainstream Solutions for ATE Estimation}
Currently, there are two main methods that can conduct the ATE estimation while mitigating the impacts of bias and missing information.
One way is to eliminate the distribution differences between the treated and control groups, \eg, sample reweighting~\cite{rosenbaum1987model}.
By adjusting the weight of each sample, sample reweighting ensures a similar distribution between the treated and control groups.
However, applying DP to this approach often requires predefined thresholds to limit global sensitivity, which lacks flexibility and interpretability.

The other is to estimate the counterfactual results of each sample and then calculate the causal effect, \eg, sample matching~\cite{rosenbaum1983central}.
This method pairs treated and control samples with similar characteristics. 
This approach can reduce selection bias by identifying individuals with similar characteristics in the treated and control groups, ensuring that the matched samples are as balanced as possible on the covariates.
Given its intuitiveness and practicality,
we choose to achieve a privacy-preserving framework based on matching in this work. 

\mypara{Propensity Score Matching}
As a typical matching method, propensity score matching (PSM) is widely used in observational experiments due to its strong interpretability and low matching complexity~\cite{yao2021survey}.

Therefore, we utilize the PSM approach as a basis to estimate counterfactual results and eliminate the bias of causal effects caused by systematic differences between the treatment and control groups.
The propensity score is defined as the conditional probability of treatment given related variables:
\begin{equation}
    e(x) = \Pr{T=1|X=x}.
\end{equation}

The propensity score reflects the probability of a sample being assigned to the treatment given a series of observed variables. 
However, in most observational studies, the treatment assignment mechanism is unknown.
A common approach is to fit a propensity score function using a standard statistical model on the dataset $D$. 
In this paper, logistic regression is adopted since it is the most frequently used model in existing works.
As a result, on the basis of the absolute value of the difference between various propensity scores, the similarity between any two samples can be calculated and utilized to match.
The distance between the sample $u_1$ in the treated group and the sample $u_2$ in the control group is as follows:
\begin{equation}
    \label{eq:distance_calculation}
    dis(u_1,u_2)=|e(x_1)-e(x_2)|,
\end{equation}
where $e(x_1)$ represents the propensity score of sample $u_1$, and $e(x_2)$ represents the propensity score of sample $u_2$.

The goal of matching is to identify several most similar samples from the opposite treatment group for each sample in the current treatment group. 
Then, the counterfactual outcome can be obtained based on the matched samples.
In general, the counterfactual outcome of the $i$-th sample is as follows:
\begin{equation}
\label{eq:counterfactual_estimation}
    Y^{CF}_i = \frac{1}{|\mathcal{P}(i)|}\sum_{l\in \mathcal{P}(i)}Y^{F}_{l},
\end{equation}
where $\mathcal{P}(i)$ is the matched neighbors of sample $i$ in the opposite treatment group.
Based on the observed and counterfactual outcomes of each sample, ATE can be obtained by~\autoref{eq:ATE_calculation}.

\subsection{Differential Privacy}
\label{section:preliminary_dp}

Differential Privacy (DP)~\cite{dwork2006calibrating} was designed for the data privacy-protection scenarios, where a trusted data curator collects data from individual users, applies perturbation to the aggregated results, and then publishes them.
Intuitively, DP guarantees that any single sample from the dataset has only a limited impact on the output. 

\begin{definition}
[$(\varepsilon,\delta)$-Differential Privacy]
\setlength{\belowdisplayskip}{-0.3cm}
{An algorithm $\mathcal{A}$ satisfies $(\varepsilon,\delta)$-differential privacy ($(\varepsilon,\delta)$-DP), where $\varepsilon>0$, if and only if for any two neighboring datasets $D$ and $D^{\prime}$, 
we have
} 
\begin{equation*}
    \forall O \subseteq Range (\mathcal{A}):\Pr {\mathcal{A}(D) \in O }
    \le e^{\varepsilon} \Pr {\mathcal{A}(D^{\prime}) \in O}+\delta,
\end{equation*}
\label{def:sample_dp}
\end{definition}
\noindent
where Range $(\mathcal{A})$ denotes the set of all possible outputs of the algorithm $\mathcal{A}$,
and $\delta$ indicates the probability of $\mathcal{A}$ failing to satisfy DP.
When $\delta=0$, which is the case we consider in this work (\ie, pure DP),
we write $\varepsilon$-DP for convenience.
Pure DP can provide strict theoretical guarantees, while approximate DP (\ie, $\delta>0$) has a certain probability of violating theoretical constraints.
Approximate DP relaxes the privacy constraint to enable the use of a wider range of composition properties, which may be helpful to improve the utility.
At the same time, according to the experimental results of~\autoref{sec:evaluation}, our method (satisfying pure DP) still shows significant advantages over the baselines (satisfying approximate DP).

In addition,
the definition of $\varepsilon$-sample differential privacy ($\varepsilon$-Sample DP) in the paper is consistent with 
$\varepsilon$-DP.
Here, we consider two datasets $D$ and $D^{\prime}$ to be \emph{neighbors}, denoted as $D\simeq D^{\prime}$, if and only if $D = D^{\prime} + r $ or $ D^{\prime} = D + r$, where $D + r$ is the dataset resulted from
adding the record $r$ to $D$.

\begin{definition}[$(\varepsilon,\delta)$-label Differential Privacy]
    {An algorithm $\mathcal{A}$ satisfies $(\varepsilon,\delta)$-label differential privacy ($(\varepsilon,\delta)$-Label DP), where $\varepsilon>0$, if and only if for any two datasets $H$ and $H^{\prime}$ that differ in the label (observed outcome) of a single sample, we have}
\begin{equation*}
    \forall O \subseteq Range (\mathcal{A}):\Pr {\mathcal{A}(H) \in O }
    \le e^{\varepsilon} \Pr {\mathcal{A}(H^{\prime}) \in O}+\delta.
\end{equation*}
\end{definition}

Similar to Definition~\ref{def:sample_dp},
we consider 
$\delta=0$ in this paper, and we write $\varepsilon$-Label DP instead of $(\varepsilon,\delta)$-Label DP.

\mypara{Laplace Mechanism}
Laplace mechanism (LM) 
satisfies the DP requirements by adding a random Laplace noise to the aggregated results~\cite{dwork2014algorithmic}.
The magnitude of the noise depends on ${GS}_f$, \ie, \emph{global sensitivity}, 
\begin{equation*}
{GS}_f = \max_{D\simeq  D^{\prime} } {\parallel f(D) - f(D^{\prime}) \parallel }_1, 
\end{equation*}
where $f$ represents the aggregation function and $D$ (or $D'$) is the users' data. 
When $f$ outputs a scalar, the Laplace mechanism $\mathcal{A}$ is given below:
\begin{equation*}
    \mathcal{A}_f(D) = f(D) + \mathcal{L}\left(\frac{{GS}_f}{\varepsilon}\right), 
\end{equation*}
where $\mathcal{L}(\beta)$ stands for a random variable sampled from the Laplace distribution $\Pr {\mathcal{L}(\beta)=x}=\frac{1}{2\beta}e^{-\left | x \right | / \beta}$.
When $f$ outputs a vector, $\mathcal{A}$ adds independent samples of $\mathcal{L}(\beta)$ to each element of the vector.
The global sensitivity of all elements is the same value.

\mypara{Random Response Mechanism}
The random response (RR) mechanism can be applied to protect the privacy of binary variables~\cite{wang2016using,zhang2018calm,du2021ahead}.
Given a specific privacy budget $\varepsilon$, the probability of outputting a true binary variable $p$ is as follows:
\begin{equation*}
    p=\frac{e^{\varepsilon}}{e^{\varepsilon}+1}.
\end{equation*}

\mypara{Composition Properties of DP}
The following composition properties of DP are commonly utilized to construct complex differentially private algorithms from simpler subroutines~\cite{dwork2006calibrating}.

\begin{itemize}
    \item \mypara{Sequential Composition}
    Combining multiple subroutines that satisfy differential privacy for
    $\{\varepsilon_{1}, \cdots,\varepsilon_{k}\}$ results in a mechanism which
    satisfies $\varepsilon$-differential privacy for $\varepsilon = { \sum_{i}\varepsilon_{i}} $.
    
    \item \mypara{Parallel Composition}
    Given $k$ algorithms working on disjoint subsets of the dataset, each satisfying DP for $\{\varepsilon_{1},\cdots,\varepsilon_{k}\}$, the result satisfies $\varepsilon$-differential privacy for $\varepsilon=\max\{\varepsilon_{i}\}$.
    
    \item \mypara{Post-processing}
    Given an $\varepsilon$-DP algorithm $\mathcal{A}$, releasing $z(\mathcal{A}(D))$ for any $z$ still satisfies $\varepsilon$-DP, \ie, post-processing the output of a differentially private algorithm does not incur any additional loss of privacy. 
\end{itemize}

\section{Methodology}

\subsection{Problem Definition}
\label{sec:problem_definition}

In this paper, we consider a dataset $D=(T,X,Y)$ that contains multiple dimensions, where $T$ stands for the treatment,
$X$ stands for the related covariates, and $Y$ stands for the observed outcome.
Note that both $T$ and $Y$ contain only one column, while $X$ can contain multiple columns.
Without loss of generality, we assume that $T=0/1$, $X\in[0,1]^d$ ($d$ is the dimension of covariates), and the maximum variation range of outcome is $B$. 
Our goal is to estimate an average treatment effect that closely aligns with the result obtained through propensity score matching while ensuring strict differential privacy.
Specifically, we aim to achieve two levels of privacy protection, \ie, \emph{label-level} and \emph{sample-level}.
For label-level privacy, only the observed outcome $Y$ is private.
For sample-level privacy, all types of data are private.
For ease of reading, 
we summarize the frequently used 
notations in \autoref{table:math_notations}.

\begin{table}[!t]
    \centering
    \caption{Summary of mathematical notations.}
    \label{table:math_notations}
    \vspace{-0.1cm}
    \footnotesize
    \setlength{\tabcolsep}{0.9em}
	\begin{tabular}{cc}
		\toprule
		\textbf{Notation} & \textbf{Description}  \\
		\midrule
		$D$ & %
        Dataset \\
        $T$ & The treatment   \\
	$X$ & The          covariates      \\
    $Y$ & The observed        outcome      \\
    $Y_1$ & The potential treated outcome of the whole population     \\
    $Y_0$ & The potential control outcome of the whole population    \\
    $\varepsilon$ & Privacy budget  \\
        $n$ & The number of all samples \\ 
        $n_t$ & The number of samples in the treated group \\
        $n_c$ & The number of samples in the control group \\
        $d$ & The dimensions of covariates  \\
        $k$ & The value of matching limit \\
		$N$ & The number of neighbors for each sample in the matching \\
		${B}$ & The maximum variation range of outcome  \\
		$w$ & The weights of the regression model \\
		
		$e$ &  The propensity score \\
		$\tau$ & The average treatment effect estimate \\

		\bottomrule
	\end{tabular}
\end{table}

\label{sec:methodology}

\subsection{Motivation}

When implementing propensity score matching under DP, a common idea is to directly add noise to the ATE estimate to achieve DP. 
However, the impact of adding or deleting any sample on the propensity score matching is difficult to evaluate.
Therefore, 
we choose to apply DP to each phase of propensity score matching, thus ensuring the privacy of the entire process.
If ATE is estimated completely according to the matching results, it could introduce excessive noise. 
This will make some samples match too many times, making the sensitivity too high. 
If the number of matches is too low, the estimation of ATE will be inaccurate.
Thus, we determine the maximum number of matches for each sample by estimating the combined impact of noise injection and matching accuracy, thereby achieving a great trade-off between these two aspects.
Considering the privacy requirements in various scenarios in practice, we designed two different levels of privacy protection approaches to estimate ATE, \ie, label-level privacy and sample-level privacy.

Here, we summarize the key challenges of ATE estimation under the two privacy settings as follows:
For label-level privacy, directly applying a standard DP mechanism to existing ATE estimation methods often introduces excessive noise. 
Furthermore, using a fixed matching upper limit is unsuitable across different privacy budgets and data distributions, making it difficult to adaptively determine a limit that balances matching accuracy and privacy protection.
For sample-level privacy, which protects the entire dataset rather than just labels, similar issues arise. 
Under stricter privacy requirements, additional challenges include allocating the overall privacy budget and determining an appropriate matching limit while ensuring DP guarantees at each step.

\subsection{Overview}
\label{section:overview}

\begin{figure*}[htbp]
\centering
\includegraphics[width=0.95\textwidth]{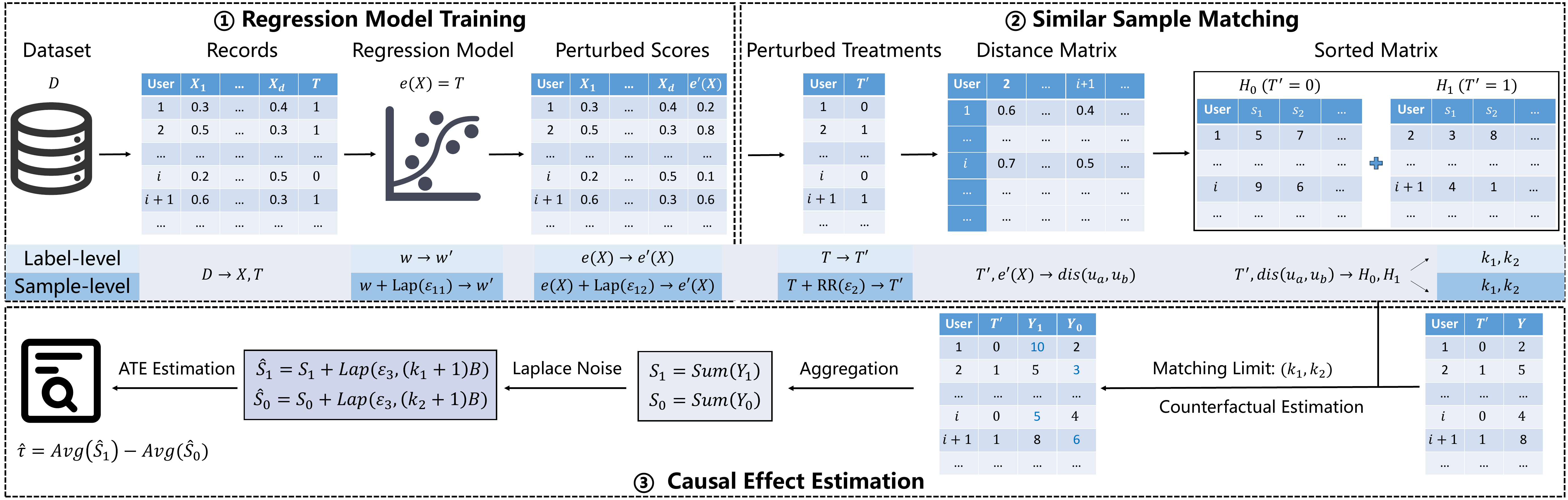}
\caption{
\method overview. \method consists of three phases: Regression model training, similar sample matching, and causal effect estimation.
First, a regression model for calculating propensity scores can be obtained in the regression model training phase.
Then, \method finds the closest neighbors in the opposite group for each sample in the similar sample matching phase.
In the causal effect estimation phase,
\method calculates each sample's counterfactual outcome based on the matching results and the matching limit, \ie, the maximum number of matched of each sample.
After that, the potential outcomes for the control and treated groups are aggregated and perturbed.
Finally, the average treatment effect can be estimated by the %
perturbed outcomes.
}
\label{fig:framework}
\end{figure*}

As shown in~\autoref{fig:framework}, %
the framework of \method contains three phases, \ie, regression model training, similar sample matching, and causal effect estimation. 

\mypara{Phase 1: Regression Model Training (RMT)}
We train a logistic regression model to estimate the propensity scores of all samples.
In the label-level setting, 
the model training and the propensity score calculation do not need to consume the privacy budget. 
The reason is that this process does not require access to the observed outcomes of the samples. 
In the sample-level setting, the training of the regression model and the estimation of the propensity score need to be perturbed to meet DP.

\mypara{Phase 2: Similar Sample Matching (SSM)}
In this phase, the distance between the propensity score of any sample and the propensity scores of all samples in the opposite treatment group can be calculated.
By sorting these scores, we can obtain the most similar neighbors of each sample in the opposite group.
Note that two sorted matrices are calculated here, one for the control group and the other for the treated group.
In the label-level setting,
this procedure still does not visit the observed outcome, thus consuming no privacy budget.
In the sample-level setting,
the true treatment $T$ is perturbed to satisfy DP in this phase.

\mypara{Phase 3: Causal Effect Estimation (CEE)}
Based on the sorted matrices in the last phase,
we can find the closest neighbors in the opposite group for each sample.
Then, the counterfactual outcomes of all samples can be estimated.
Here, we limit the maximum number of times each sample can be used for matching.
The matching limit can be adaptively adjusted based on the privacy budget and the characteristics of the dataset.
After calculating the counterfactual outcomes,
we aggregate and perturb the sum of potential outcomes of all samples.
Then, the ATE can be obtained by~\autoref{eq:ATE_calculation}.

\subsection{Regression Model Training}
\label{section:regression_model_training}

\begin{algorithm}[!t]

        \caption{Regression Model Training (Phase 1)}
        
        \label{algorithm:phase_1}
        \KwIn{Original dataset $D$, 
        privacy level $l$,
        privacy budget $\varepsilon_{11}, \varepsilon_{12}$ (if $l$ is sample-level)
        }
        \KwOut{Propensity Score $e'(X)$}

        Train a logistic regression model based on the covariates $X$ and treatment $T$. \\
        \If{l is label-level}
        {
        $w'\leftarrow w$
        }
        \Else
        {
        \Comment{Private model training (sample-level)}
        $w' \leftarrow w + Lap(\varepsilon_{11},\Delta f_w)$ %
        }
        Calculate the propensity score of each sample $e(X)$ based on the model parameter $w'$. \\
        \If{l is label-level}
        {
        $e'(X)\leftarrow e(X)$
        }
        \Else
        {
        \Comment{Private score calculation (sample-level)}
        $e'(X) \leftarrow e(X) + Lap(\varepsilon_{12},\Delta f_e)$ %
        }

\end{algorithm}

In the first phase,
we train a logistic regression model based on the original dataset.
The covariates $X$ is the independent variable of the regression model, while the treatment $T$ is the dependent variable.
\autoref{algorithm:phase_1} illustrates the basic process of the first phase.

\mypara{Label-level Privacy}
In the label-level privacy, 
only the outcomes need to be protected.
The regression model training phase does not require access to the true outcomes.
Therefore, we can utilize the true parameters of the trained model $w$ to predict the propensity scores of all samples without consuming the privacy budget.
The predicted propensity scores $e(X)$ also do not require to be perturbed in the label-level privacy setting.

\mypara{Sample-level Privacy}
In contrast, sample-level privacy requires to protect the privacy of all types of variables.
If we directly adopt the weight $w$ of the unprotected regression model,
there will be a risk of privacy leakage~\cite{wei2023dpmlbench}.
Therefore, we choose to perturb the training
of the regression model to satisfy DP. 
The training of the logistic
regression model can be regarded as a specific case for regularized
empirical risk minimization. For the logistic regression with $\ell_2$
regularization penalty, the regularized empirical loss can be written
as follows:

\begin{equation}
    J(w)=\frac{1}{n}\sum_{i=1}^n \log(1+e^{-X_i^T w t_i}) + \frac{\lambda}{2}||w||^2_2,
\end{equation}
where $X$ is the training feature (covariates) containing $d$-dim and $t_i$ is the $i$-th sample's treatment.
The weight $w$ can be perturbed to satisfy DP.
The $L_1$-sensitivity of $w$ is $\frac{2{d}}{n\lambda}$, and
the detailed derivation is given in Proof 1 of~\autoref{proof:user_level_DP}.
By injecting Laplace noise into the true weight $w$ with the privacy budget of $\varepsilon_{11}$,
we can generate a privacy-preserving regression model.

After finishing the private model training,
we can utilize the model to calculate the propensity score of each sample.
Since this step requires accessing the true covariate again, we need to add Laplace noise to the relevant query results to meet differential privacy.
The output range of logistic regression model is $[0,1]$, thus the sensitivity of propensity score is $\Delta f_e = 1$. 

In this phase, we inject noise into $w$ and $e(X)$, respectively.
Note that both parts of noise are indispensable.
The purpose of adding noise to $w$ is to protect the privacy of training data (\ie, $X$ and $T$). If the DP regression model is queried using public or non-sensitive data, no additional privacy budget is consumed due to the post-processing property. 
However, $e(X)$ is computed using the actual data $X$, which constitutes additional access to private information. 
To preserve the privacy of $X$, we still need to inject noise into $e(X)$.
In addition, if $w$ is not perturbed and only $e(X)$ is perturbed, we cannot apply the parallel composition to perturb each sample in $X$ because $w$ contains sensitive information.
Therefore, it is necessary to inject noise into $w$ and $e(X)$.

\subsection{Similar Sample Matching}
\label{section:similar_sample_matching}

\begin{algorithm}[!t]

        \caption{Similar Sample Matching (Phase 2)}
        
        \label{algorithm:phase_2}
        \KwIn{Original dataset $D$,
        propensity score $e'(X)$,
        privacy level $l$,
        privacy budget $\varepsilon_{2}$ (if $l$ is sample-level)
        }
        \KwOut{Treatment $T'$, Sorted Matrices $H$}

        \If{l is label-level}
        {
        $T'\leftarrow T$
        }
        \Else
        {
        \Comment{Treatment perturbation (sample-level)}
        $T' \leftarrow RR(T;\varepsilon_2)$ %
        }
        Obtain the division of treated and control groups by $T'$ \\
        \Comment{Distance sorting}
        \For{each sample j in the control group}
        {
        Calculate the distance vector $dis_{j,s_t}$ between $j$ and the samples $s_t$ in the treated group based on~\autoref{eq:distance_calculation} \\
        $H_0^j\leftarrow argsort(dis_{j,s_t})$ 
        }
        \For{each sample j in the treated group}
        {
        Calculate the distance vector $dis_{j,s_c}$ between $j$ and the samples $s_c$ in the control group based on~\autoref{eq:distance_calculation} \\
        $H_1^j\leftarrow argsort(dis_{j,s_c})$ 
        }

\end{algorithm}

In the second phase,
we try to calculate the similarity of each sample with all samples in the other group and rank them.
\autoref{algorithm:phase_2} provides the specific procedures of the similar sample matching.

\mypara{Label-level Privacy}
In this setting,
the treatment $T$ is accessible, which does not need to be perturbed.
Then,
we can calculate the distance between each sample and all samples in the opposite treatment group based on the propensity score $e'(X)$ obtained from the first phase.
The specific calculation formula is shown in~\autoref{eq:distance_calculation}.
Then, we traverse each sample and sort all candidate samples in the opposite treatment group in ascending order based on the distance vectors.
From this, we can obtain two sorted index matrices, one for the control group (\ie, $H_0$) and the other one for the treated group (\ie, $H_1$).
The sorted matrices can be utilized for counterfactual estimation in the next phase.

\mypara{Sample-level Privacy}
Unlike label-level privacy,
the treatment $T$ is sensitive information in the sample-level setting.
Considering that $T$ is a binary variable, it is not appropriate to inject Laplace noise, 
which is used for continuous variables. 
Here, based on the privacy budget of $\varepsilon_2$, we adopt the random response mechanism to perturb $T$, which effectively protects privacy and improves data utility.
The next steps are similar to the label-level settings.
Using the perturbed treatment $T'$ and perturbed propensity scores $e'(X)$,
we can calculate the distance between each sample and other samples in the opposite perturbed treatment group.
Then, we traverse each sample and sort other samples 
according to the corresponding distance values.
After that, we obtain two sorted matching matrices.

\subsection{Causal Effect Estimation}
\label{section:causal_effect_estimation}

\begin{algorithm}[!t]

        \caption{Causal Effect Estimation (Phase 3)}
        
        \label{algorithm:phase_3}
        \KwIn{Original dataset $D$, sorted matrices $H$,
        the number of neighbors in matching $N$,
        treatments $T'$,
        privacy level $l$,
        the range of outcome $B$,
        privacy budget $\varepsilon_{3}$,
        the error coefficient $c$ (label-level) or $h$ (sample-level)
        }
        \KwOut{Average treatment effect estimate $\hat \tau$}

        Count the maximum number of times that all samples appear in the first $N$ neighbors of the sorted index matrices $M$ (label-level) or $M'$ (sample-level). \\ 
        \Comment{Obtain the number of samples in the treated and control groups, and the average maximum number of matches.}
        \If{l is label-level}
        {
        $n_t, n_c \leftarrow T, M_1=\frac{M}{N}$
        }
        \Else{
        $n'_t, n'_c \leftarrow T',M'_1=\frac{M'}{N}$ // Sample-level
        }
        $r_1=\frac{n_t}{n_c}$ (or $r_1=\frac{n'_t}{n'_c}$) \\
        
        \Comment{Matching limit calculation}
        \If{$r_1 \leq 1$}
        {
        Calculate the matching limit of treated group $k_1$ based on Equations
        \ref{eq:cal_opt_k}-\ref{eq:round_clip_opt_k}
        (or Equations \ref{eq:user_level_opt_k}-\ref{eq:user_level_round_clip_opt_k}) \\ 
        $k_2=\max(1,round(k_1\cdot r_1))$ 
        }
        \Else{
        Calculate the matching limit of control group $k_2$ based on Equations
        \ref{eq:cal_opt_k}-\ref{eq:round_clip_opt_k}
        (or Equations \ref{eq:user_level_opt_k}-\ref{eq:user_level_round_clip_opt_k}) \\ 
        $k_1=\max(1,round(k_2/r_1))$
        
        }
        \Comment{Counterfactual estimation}
        \For{i-th sample in D}
        {
        Remove candidate samples that have reached the upper matching limit from the original sorted list. \\
        Select the nearest $N$ neighbors from the remaining sorted list, and add $1$ to their matching counts. \\
        Calculate the counterfactual outcome based on~\autoref{eq:counterfactual_estimation}.
        }
        \Comment{Outcome aggregation}
        $S_1=Sum(Y_1)$, $S_0=Sum(Y_0)$ \\
        \Comment{Noise perturbtion}
        $\hat{S}_1=S_1+Lap(\varepsilon_3,(k_1+1)B)$,
        $\hat{S}_0=S_0+Lap(\varepsilon_3,(k_2+1)B)$ \\
        \Comment{ATE estimation}
        $\hat\tau=\frac{1}{n}\cdot\hat{S}_1-\frac{1}{n}\cdot\hat{S}_0$

\end{algorithm}

In the third phase, we need to calculate each sample’s counterfactual estimate and obtain the final ATE estimate.

\mypara{Label-level Privacy}
If we apply the original non-private approach to select $N$ neighbors for computing counterfactual results,
the number of matches for some samples may be extremely high, which will make the global sensitivity large.
On the other hand, if we set the upper limit of the number of times each sample is matched very small,
the disturbance intensity of the noise will be reduced,
but the error of the counterfactual estimate will tend to increase.
In addition, the noise intensity under various privacy budgets is different, making matching selection more challenging.

To address the above difficulties, we design an adaptive matching upper limit determination mechanism by considering the combined impact of noise injection and matching error, which can provide different matching upper limits for various privacy budgets.

We first consider the expected squared error of estimating an aggregated potential outcome $S=Sum(Y)$. 
Assuming that $\hat S$ is the estimation of $S$, 
then the expected squared error can be written as the summation of variance and the squared bias of $\hat S$:
\begin{equation}
    \label{eq:expected_squared_error}
    \mathbb{E}[(\hat S - S)^2]=\mathsf{Var}[\hat S]+\mathsf{Bias}[\hat S]^2
\end{equation}

Given the maximum variation range of the outcome $B$,
the matching upper limit for each sample $k$ and privacy budget $\varepsilon$,
we can obtain $\mathsf{Var}[\hat S]\approx\frac{2k^2B^2}{\varepsilon^2}$.

For $\mathsf{\hat S}$, its value is related to the number of samples, the true maximum number of matches, and the set matching upper limit.
We can count the maximum number of times that all samples appear in the first $N$ neighbors of the sorted index matrices, denoted as $M$.
Here, we let $M_1=\frac{M}{N}$, which represents the average maximum number of matches.
Intuitively, the smaller the matching upper bound, the greater the bias.
The larger the number of samples and the average maximum number of matches, the larger the bias.
Therefore, we estimate $\mathsf{\hat S}$ as follows:

\begin{equation}
    \label{eq:appro_bias}
    \mathsf{Bias}[\hat S]\approx c\cdot B\cdot n_1\cdot \frac{M_1}{k},
\end{equation}
where $n_1=\max{(n_t,n_c)}$ is the number of samples in the treated group or control group and $c$ is the error coefficient, which is a hyperparameter. 
Therefore, the combined error of noise perturbation and matching limit can be approximately estimated as follows:
\begin{equation}
    \label{eq:combined_appro_error}
    \mathbb{E}[(\hat S - S)^2]\approx \frac{2k^2B^2}{\varepsilon^2} + c^2\cdot B^2\cdot n_1^2\cdot \frac{{M_1}^2}{k^2}.
\end{equation}

By calculating the minimum value of~\autoref{eq:combined_appro_error},
we can obtain the optimal value $k^*$ as follows:
\begin{equation}
\label{eq:cal_opt_k}
    k^*=\sqrt{\frac{\varepsilon \cdot c \cdot n_1 \cdot M_1}{2}}.
\end{equation}

Since the matching limit is a positive integer and does not require to exceed the true average maximum number of matches $M_1$,
we can obtain the final optimal matching limit $k_f$ as follows:
\begin{equation}
    \label{eq:round_clip_opt_k}
    k_f=\min( \max( round(k^*),1 ),M_1).
\end{equation}

Here, $k_f$ represents the upper bound of the number of matches for the larger number of treatment groups.
The matching upper bound for the other group can be calculated using $k_f$ and the number of the two treatment groups.
For instance,
if the number of control group $n_c$ is larger than the number of treated group $n_t$ (\ie, $r_1=\frac{n_t}{n_c}\leq 1$),
the number of matches for the treated group will usually be higher than the number of matches for the control group.
In this case,
we obtain
$k_1 = k_f$ and
$k_2 = \max(1,round(k_1 \cdot r_1))$.
On the contrary,
if the number of control group $n_c$ is smaller, we let
$k_2 = k_f$ and
$k_1 = \max(1,round(k_2 /r_1))$.
Since the upper limit is calculated based on the average maximum number of matches, the final matching limit needs to be multiplied by the preset number of neighbors $N$.
The matching limit for treated group is $k_1\cdot N$, and the matching limit for control group is $k_2\cdot N$.

After determining the upper limit of the matching, we traverse all samples and find the closest $N$ neighbors for each sample.
At the beginning, the records of the number of matches for all samples are initialized to $0$.
When $N$ neighbors are selected in each traversal, these $N$ samples' matching counts are increased by $1$.
Samples that reach the upper limit will not be selected in subsequent matches.
For each sample, the counterfactual outcome can be computed by the matched neighbors' observed outcome, as shown in~\autoref{eq:counterfactual_estimation}.

After calculating the counterfactual outcomes of all samples, we can estimate the treatment effect of each sample. However, calculating and perturbing the treatment effect of each individual will introduce a lot of noise, making the average treatment effect estimate inaccurate. 
Therefore, we choose to aggregate the potential outcomes of all samples and add Laplace noise, which can effectively reduce the impact of noise.

As shown in~\autoref{fig:framework},
$Y_1$ is composed of the outcomes of the treated group, while $Y_0$ is composed of the outcomes of the control group.
Laplace mechanism is applied to protect the samples' privacy.
Note that the samples of the treated group and the control group are non-overlapping,
thus these two parts can share the same privacy budget.
Regarding the global sensitivity,
the sensitivity of treated group is $(k_1+1)B$, which is determined by two factors: the matching upper limit of the counterfactual estimation and its own observed outcome.
Similarly, we can obtain the sensitivity of control group is $(k_2+1)B$.
Based on the perturbed aggregated outcomes,
we can obtain the final ATE estimate $\hat\tau$.

\mypara{Sample-level Privacy}
Considering the impact of noise injection and matching error, sample-level also needs to determine a
suitable maximum number of matches for each sample to achieve a
promising estimation result,
which is similar to label-level privacy.
However, since the treatment of each sample and the matching results are perturbed,
it is difficult to accurately estimate the errors caused by noise and matching. 
According to the source of the error, the setting of an ideal match upper limit is related to the privacy budget and the true maximum number of matches, as well as the characteristics of the dataset.
Unfortunately, most of this information cannot be obtained in the sample-level setting.
Inspired by~\autoref{eq:cal_opt_k} in the label-level privacy,
we set the value of matching limit in the sample-level privacy as follows:
\begin{equation}
    \label{eq:user_level_opt_k}
    k^*=\sqrt{\frac{\varepsilon_3 \cdot h \cdot n'_1 \cdot M'_1}{2}},
\end{equation}
where $\varepsilon_3$ is the privacy budget used for perturbing the aggregated outcomes, $h$ is the error coefficient, and $n'_1=\max{(n'_t,n'_c)}$ is the number of samples in the perturbed treated group or control group.
$M'_1$ is the average maximum number of matches.
The calculation of $M'_1$ is similar to $M_1$ in the label-level privacy.
The only difference is that $M'_1$ is calculated based on the perturbation information rather than true information.
When the privacy budget is high, the noise intensity is low, and increasing $k^*$ helps reduce the matching error.
If $n$ is high, the true matching upper limit is usually higher, which requires a higher $k^*$.
Since the calculation result of~\autoref{eq:user_level_opt_k} may not be an integer, we further process $k^*$ as follows:
\begin{equation}
\label{eq:user_level_round_clip_opt_k}
    k_f=\max(round(k^*),1 )
\end{equation}

After calculating the matching limit of one group, the maximum matching upper bound of the other group can be obtained based on the number of samples in the perturbed groups.
Next, we can calculate the counterfactual estimate for each sample.
We then aggregate the potential outcomes of $T'=1$ and $T'=0$ for all samples.
To satisfy DP,
we utilize Laplace noise to perturb the aggregated outcomes, with the privacy budget of $\varepsilon_3$.
Finally,
we can compute the ATE estimate according to the perturbed aggregated outcomes, which is similar to the calculation at label-level.

\subsection{Putting Things Together}
\label{section:putting_things_together}

The above three phases constitute the overall process of~\method.
Due to space limitations,
we defer the pseudo-code to \autoref{sec:appendix_overall_workflow}.

\subsection{Algorithm Analysis}
\label{section:algorithm_analysis}

\mypara{Privacy Analysis}
Recalling~\autoref{fig:end_to_end},
\method mainly consists of three phases: %
regression model training, similar sample sampling, and causal effect estimation.
For the label-level privacy, 
the outcome is visited and perturbed in the causal effect estimation phase, with the privacy budget of $\varepsilon$.

For the sample-level privacy of~\method,
the total privacy budget is divided into all phases.
In the phase of regression model training,
\method needs to protect the true regression model weights and the true propensity score, 
which consume privacy budget of $\varepsilon_{11}$ and $\varepsilon_{12}$ respectively.
In the second phase,
the treatment is perturbed based on the privacy budget $\varepsilon_2$.
In the causal effect estimation phase,
the aggregated outcome consumes the privacy budget $\varepsilon_3$.
Therefore, the total privacy budget is $\varepsilon=\varepsilon_{11}+\varepsilon_{12}+\varepsilon_2+\varepsilon_3$.
We obtain the following theorems, and the detailed proofs are deferred to \autoref{proof:user_level_DP}.

\begin{theorem}
\label{throrem:label_level}
If $l$ is label-level, \autoref{algorithm:put_together} satisfies $\varepsilon$-Label~DP.
\end{theorem}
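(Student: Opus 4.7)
The plan is to show that labels $Y$ are only consumed at a single post-matching aggregation step, and that the Laplace noise added there is calibrated to the exact label-sensitivity of that aggregation. First, I would trace which quantities depend on $Y$ in the label-level branch. In \autoref{algorithm:phase_1}, the logistic regression is fit on $(X,T)$ and the propensity scores $e'(X)$ are a function of $(w,X)$; in \autoref{algorithm:phase_2}, the perturbed treatment equals $T$ and the sorted matrices $H_0,H_1$ are computed entirely from $T$ and $e'(X)$. Hence Phases~1 and 2 produce identical outputs on two label-neighboring datasets and consume zero label-privacy budget. Likewise, in \autoref{algorithm:phase_3} the matching-limit computation (\autoref{eq:cal_opt_k}--\autoref{eq:round_clip_opt_k}) uses only $n_1, M_1, \varepsilon, c$, all label-free, so $k_1, k_2$ and the matched neighbor sets $\mathcal{P}(\cdot)$ are also deterministic with respect to $Y$.

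Second, I would bound the label sensitivity of the only $Y$-dependent quantities released prior to noise addition, namely $S_1=\mathrm{Sum}(Y_1)$ and $S_0=\mathrm{Sum}(Y_0)$. Suppose we replace the label of one sample $i$ by a value differing by at most $B$. If $i$ lies in the treated group, then $Y^F_i$ appears exactly once inside $S_1$ as $i$'s own observed contribution, and it can appear inside the counterfactual estimates of control samples only through matching, where the cap enforced in \autoref{algorithm:phase_3} limits the number of such appearances to $k_1 \cdot N$. Since each counterfactual in \autoref{eq:counterfactual_estimation} averages $N$ neighbors, each appearance contributes at most $B/N$ to $S_1$, yielding an indirect contribution of at most $k_1 \cdot N \cdot B/N = k_1 B$. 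The total change to $S_1$ is thus at most $(k_1+1)B$, while $S_0$ is unchanged because $i$ contributes to neither term of $S_0$. A symmetric argument for $i$ in the control group yields sensitivity $(k_2+1)B$ on $S_0$ and no change to $S_1$.

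Finally, since any single label change affects only one of $S_1, S_0$, the two independent Laplace perturbations $\hat S_1 = S_1 + \mathcal{L}((k_1+1)B/\varepsilon)$ and $\hat S_0 = S_0 + \mathcal{L}((k_2+1)B/\varepsilon)$ behave as a parallel composition under label neighboring; each release individually is $\varepsilon$-Label DP by the Laplace mechanism, so the joint release is too. The returned ATE $\hat\tau = (\hat S_1 - \hat S_0)/n$ is a deterministic post-processing, so \autoref{algorithm:put_together} is $\varepsilon$-Label DP by the post-processing property. I expect the main obstacle to be the bookkeeping in the sensitivity step: justifying that the enforced matching cap really does upper-bound the cross-sample influence of a single label, which requires carefully separating the direct (self) contribution from the indirect (matched-in-counterfactual) contribution, and confirming that the $1/N$ averaging exactly cancels the $N$ factor in the cap so the clean $(k+1)B$ bound emerges.
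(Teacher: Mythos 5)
Your proof is correct and follows essentially the same route as the paper's: Phases 1 and 2 (and the matching-limit computation) are label-oblivious, the only label-dependent release is the pair $(\hat S_1,\hat S_0)$ protected by Laplace noise, a single label change touches only one of the two sums (so they share the budget $\varepsilon$ by parallel composition), and the ATE is post-processing. Your explicit derivation of the $(k+1)B$ sensitivity --- one direct contribution plus at most $k\cdot N$ matched appearances each worth $B/N$ --- is more detailed than the paper's sketch, which simply asserts this bound, but it is the same argument.
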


\begin{proof}
    (Sketch) 
    In the ``regression model training'' and ``similar sample matching'' phases,
    no privacy budget needs to be consumed since the true observed outcome is not visited.
In the ``causal effect estimation'' phase,
the aggregated outcome is perturbed by Laplace mechanism with the privacy budget $\varepsilon$.
Therefore, \autoref{algorithm:put_together} satisfies $\varepsilon$-Label DP.

\end{proof}

\begin{theorem}
\label{throrem:user_level}
If $l$ is sample-level, 
\autoref{algorithm:put_together} satisfies $\varepsilon$-Sample~DP, where $\varepsilon=\varepsilon_{11}+\varepsilon_{12}+\varepsilon_2+\varepsilon_3$.
\end{theorem}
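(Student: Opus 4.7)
The plan is to walk through the three phases of~\autoref{algorithm:put_together} in order, show that each is differentially private with its allocated slice of the total budget, and assemble the overall guarantee by sequential composition across phases together with post-processing for every intermediate operation that does not reach into the raw dataset again.

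In the regression model training phase, I would first invoke the Laplace mechanism for the weight $w$ with scale $\Delta f_w / \varepsilon_{11} = 2d/(n\lambda\varepsilon_{11})$, yielding $\varepsilon_{11}$-DP once the sensitivity derivation from Proof~1 of~\autoref{proof:user_level_DP} is in hand. The second step releases the score vector $e(X)$ with Laplace noise of scale $1/\varepsilon_{12}$; since each coordinate $e(X_i) = \sigma(X_i^\top w')$ is computed from a single record $X_i$ (with $w'$ already fixed as a post-processing input) and lies in $[0,1]$, parallel composition across records together with the per-coordinate sensitivity $\Delta f_e = 1$ give $\varepsilon_{12}$-DP for the vector. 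Sequential composition of the two releases yields $(\varepsilon_{11}+\varepsilon_{12})$-DP for Phase~1.

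In the similar sample matching phase the only new access to raw data is the random response on $T$, which is $\varepsilon_2$-DP per record and hence $\varepsilon_2$-DP for the whole vector by parallel composition. Distance computation, sorting, the matrices $H_0, H_1$, and, later in Phase~3, the counts $M'_1$, the matching limits $k_1, k_2$, the per-sample neighbor selection, and the counterfactual estimates are all deterministic functions of the already-released $(w', e'(X), T')$, so they are post-processing and cost no additional budget. The entire Phase~3 budget $\varepsilon_3$ is therefore spent on the two Laplace releases $\hat S_1$ and $\hat S_0$.

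The most delicate step, and the one I expect to be the main obstacle, is verifying the sensitivities $(k_1+1)B$ and $(k_2+1)B$ used for those Laplace releases. For $S_1 = \mathrm{Sum}(Y_1)$, adding or removing a record $r$ affects the sum in two ways: the direct contribution of $r$'s own potential outcome, of magnitude at most $B$, and an indirect contribution through the counterfactual estimates of opposite-group samples for which $r$ is selected as a matched neighbor. By the Phase~3 matching-limit construction, the total number of appearances of $r$ on the treated side is bounded by $k_1 N$, each appearance contributes a term $Y_r^F/N$ of magnitude at most $B/N$ to a single counterfactual, and these counterfactuals are each summed once into $S_1$; the indirect contribution is thus at most $k_1 B$, giving $\Delta S_1 \le (k_1+1)B$, with a symmetric argument yielding $\Delta S_0 \le (k_2+1)B$. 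Because the factual halves of $S_1$ and $S_0$ partition the (perturbed) data into disjoint treatment groups and the cross-group counterfactual contribution of any record is already absorbed into the bound above, parallel composition justifies sharing $\varepsilon_3$ between the two releases; rescaling by $1/n$ to form $\hat\tau$ is post-processing, and a final sequential composition across the three phases delivers the stated total $\varepsilon_{11}+\varepsilon_{12}+\varepsilon_2+\varepsilon_3$.
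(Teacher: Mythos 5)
Your proof follows essentially the same route as the paper's: sequential composition across the three phases, parallel composition for the per-record propensity-score and randomized-response releases, post-processing for every computation performed on already-perturbed quantities ($w'$, $e'(X)$, $T'$, the sorted matrices, the matching limits), and the $(k_1+1)B$ and $(k_2+1)B$ sensitivities for the two Laplace releases in Phase 3, with the two groups sharing $\varepsilon_3$ by parallel composition. The only difference is that you explicitly derive those Phase-3 sensitivities (direct contribution $B$ plus at most $k_1 N$ neighbor appearances of weight $B/N$ each), which the paper asserts in \autoref{section:causal_effect_estimation} but does not re-derive inside its proof of the theorem.
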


\begin{proof}
    (Sketch) In the sample-level setting, all types of data are sensitive.
    In the first phase, both the 
    model training and propensity score calculation use real information.
    Therefore, the model weights (consuming the privacy budget $\varepsilon_{11}$) and the propensity scores (consuming  $\varepsilon_{12}$) are injected with Laplace noise to achieve DP.
    In the second phase,
    the true treatment is utilized to guide the sample matching. To ensure DP, the treatment is perturbed with the privacy budget $\varepsilon_2$.
    The distance matrix calculation and sorting are finished based on the perturbed information, which is regarded as post-processing.
    In the third phase, 
    the outcome is perturbed based on the privacy budget $\varepsilon_3$, which is similar to the label-level setting.
    According to the sequential composition, \autoref{algorithm:put_together} satisfies $\varepsilon$-Sample~DP, where $\varepsilon=\varepsilon_{11}+\varepsilon_{12}+\varepsilon_2+\varepsilon_3$.
\end{proof}

\mypara{Error Analysis}
For label-level privacy,
we theoretically analyze the error bound of the aggregated potential outcome in~\autoref{throrem:error_bound_label}.
The detailed proof is in~\autoref{proof_error_bound_label}.

\begin{theorem}
    \label{throrem:error_bound_label}
     For the label-level privacy, the expected squared error of aggregated potential outcome $S$ is bounded by $2(\frac{(k+1)B}{\varepsilon})^2+(\frac{R}{N}B)^2$,
     where $R$ is the total number of times that neighbor samples are replaced when matching without matching upper limit and matching with matching upper limit.
\end{theorem}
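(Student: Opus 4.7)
The plan is to start from the standard bias-variance decomposition of mean squared error,
\begin{equation*}
\mathbb{E}[(\hat S-S)^2]=\mathsf{Var}[\hat S]+\mathsf{Bias}[\hat S]^2,
\end{equation*}
as already stated in \autoref{eq:expected_squared_error}, and bound the two terms separately. The variance term is purely a consequence of the Laplace perturbation in the causal effect estimation phase, while the bias term accounts for the gap between the matching-limited estimator (used to compute $S$ privately) and the idealized match-without-limit estimator.

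For the variance, I would invoke the construction in \autoref{section:causal_effect_estimation}: the privatized aggregate $\hat S$ is obtained by adding $\mathcal{L}\!\left(\frac{(k+1)B}{\varepsilon}\right)$ to $S$, where $(k+1)B$ is the global sensitivity (each sample contributes its own outcome plus at most $k$ usages as a matched neighbor, each scaled by $B$). Using $\mathsf{Var}[\mathcal{L}(\beta)]=2\beta^2$, this immediately gives $\mathsf{Var}[\hat S]=2\bigl((k+1)B/\varepsilon\bigr)^2$, the first term of the bound.

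For the bias, I would argue at the level of individual counterfactuals as in \autoref{eq:counterfactual_estimation}. In the matching-limit-free estimator each sample $i$ has a counterfactual $Y^{CF}_i=\tfrac{1}{N}\sum_{l\in\mathcal{P}(i)}Y^F_l$; imposing the upper limit forces some candidate neighbors that have already been chosen $k$ times to be swapped out for the next-best available neighbor. Each such single swap perturbs the corresponding $Y^{CF}_i$ by at most $\tfrac{B}{N}$ since only one of the $N$ averaged outcomes is replaced and outcomes lie in a range of width at most $B$. Summing over all $R$ replacements across all matchings (with the sign-worst case), the total change in the aggregated outcome $S$ is bounded by $\tfrac{R}{N}B$, yielding $|\mathsf{Bias}[\hat S]|\le \tfrac{R}{N}B$. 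Squaring and adding gives the claimed bound.

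The main obstacle I anticipate is being precise about what $R$ counts and how each replacement is charged to the bias, since multiple samples can share neighbors and a single ``replacement event'' might conceivably cascade. I would resolve this by defining $R$ explicitly as the edit distance between the two neighbor multisets produced by the unconstrained and constrained matching procedures (one unit per swapped neighbor slot), so that the triangle inequality lets me charge exactly $B/N$ to each unit of $R$ without double counting; everything else is routine. An additional minor point is that the worst-case deterministic bias upper bound suffices because $\mathsf{Bias}[\hat S]$ is deterministic conditional on the matching output (the noise has mean zero and is independent of the matching procedure under label-level privacy), so no randomness from the Laplace term enters the bias term.
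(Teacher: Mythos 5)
Your proposal is correct and follows essentially the same route as the paper's proof: the bias--variance decomposition of \autoref{eq:expected_squared_error}, the Laplace variance $2\bigl((k+1)B/\varepsilon\bigr)^2$ from the $(k+1)B$ sensitivity, and a bias bound of $\tfrac{R}{N}B$ obtained by charging $B/N$ to each replaced neighbor slot (the paper instantiates your ``edit distance'' definition of $R$ concretely as $\sum_{j}\max(0,u_j-k)$ over candidate samples $j$). No substantive difference.
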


\begin{proof}
    (Sketch) Let $\hat S$ denotes the estimation of $S$,
    the expected squared error can be written as the summation of variance and the squared bias of $\hat S$ according to~\autoref{eq:expected_squared_error}.
    The variance part comes from Laplace noise, and the expected value is $2(\frac{(k+1)B}{\varepsilon})^2$.
    The bias part comes from the matching difference caused by whether the matching upper limit is applied, and the upper bound is $(\frac{R}{N}B)^2$.
    Combining the above results, we can obtain the final error bound.
\end{proof}
For sample-level privacy,
the treatment of each sample is perturbed to satisfy DP, making the sample
grouping of the original and privacy-preserving data inconsistent. Furthermore, the regression model is
also perturbed, resulting in different matching results for original and privacy-preserving
settings. 
Therefore, it is difficult to directly derive the error bound.

\mypara{Complexity Analysis}
We compare the time complexity and the space complexity of various methods~\cite{lee2019privacy,koga2024differentially,mckenna2022aim,zhang2021privsyn,schroder2025private}.
The running time of~\method is significantly lower than \aim and \syn.
The space consumption of \method is the lowest, and the space consumption of~\aim is higher than that of other methods.
The detailed analysis can be found in~\autoref{app:complexity_analysis}.

\section{Evaluation}
\label{sec:evaluation}
In this section, we first conduct an end-to-end experiment to illustrate the effectiveness of~\method in~\autoref{subsec:end_to_end}. 
Then, we conduct a hyperparameter study for label-level privacy in~\autoref{section:paramter_study_label_level}.
Furthermore, we explore the impact of hyperparameter for sample-level privacy in~\autoref{section:paramter_study_sample_level}.

\subsection{Experimental Setup}
\label{subsec:experimental_setup}

\mypara{Datasets}
We run experiments on the four typical datasets, including real, semi-real, and synthetic datasets. 
These datasets are classic benchmarks in causal inference and are widely adopted in existing studies~\cite{lee2019privacy,koga2024differentially}.
The basic information of these four datasets are shown in~\autoref{table:dataset_statistics}, and the details of these datasets are deferred to~\autoref{appendix_datasets}.

\mypara{Metric}
To evaluate the quality of various methods~\cite{lee2019privacy,koga2024differentially,mckenna2022aim,zhang2021privsyn}, we utilize the metric of relative error (RE) to show their performance.
The related formula is as follows:

\begin{equation*}
    \label{eq:relative_error}
    RE_{ATE}=\frac{|\hat\tau- \tau|}{\tau},
\end{equation*}
where $\tau$ is the true ATE estimate based on the non-private PSM method and $\hat \tau$ is the perturbed ATE estimate based on the privacy-preserving mechanism.
The RE of the non-private ATE estimate is 0. 
A value of RE closer to 0 indicates a more accurate ATE estimate.

\begin{table}[!t]
    \caption{Dataset Statistics.}
    \vspace{-0.2cm}
    \centering
    \setlength{\tabcolsep}{0.7em}
    \begin{tabular}{c | c | c | c | c}
    \toprule
     \textbf{Datasets} & \textbf{Treated} & \textbf{Control} & \textbf{Total} & \textbf{Type} \\
     \midrule
       IHDP~\cite{hill2011bayesian}  & 139 & 608 & 747 
       & Semi-real
       \\
       Lalonde~\cite{dehejia1999causal}  & 185 & 260 & 445 
       & Real 
       \\
       ACIC~\cite{dorie2019automated} & 858 & 3944 & 4802 & Semi-real  \\
       Synth~\cite{imbens2015causal} & 489 & 511 & 1000 & Synthetic  \\ 
      \bottomrule
    \end{tabular}
    
    \label{table:dataset_statistics}
\end{table}

\mypara{Competitors} 
In this work, we compare \method with two representative approaches.
The first is the existing differential private ATE estimation methods (\ie, \ipw~\cite{lee2019privacy},  \smdp~\cite{koga2024differentially}, and \dpci~\cite{schroder2025private}),
which are most comparable to our work in terms of problem assumptions and privacy protection scope.
\ipw first uses a subset of the original dataset to learn a perturbed propensity score function, and then estimates 
causal effect on the remaining samples using privacy-preserving inverse probability weighting. 
\smdp employs a smooth sensitivity-based mechanism combined with an exact matching estimator, where the matching variables are required to be discrete, to achieve privacy-preserving ATE estimation.
\dpci estimates the ATE through doubly robust estimation,
and guarantee %
DP by output perturbation. 
This method further estimates the differentially private variance and constructs the confidence intervals (CIs),
which is beyond the focus of this paper.
To ensure a fair comparison, we allocate all privacy budgets to the ATE estimation.

The second is the advanced differentially private data synthesis methods (\ie, \syn~\cite{zhang2021privsyn} and \aim~\cite{mckenna2022aim}),
which are another potential solution to the ATE estimation problem.
Including this type of comparison helps better understand the performance of general DP synthesis schemes on this problem.
By effectively capturing the correlation between various attributes of the original dataset and restoring the original distribution as much as possible, 
\syn achieves great data synthesis performance.
By following the select-measure-generate paradigm,
combined with an iterative and greedy approach to select the most useful queries, \aim can achieve low errors 
across a range of experimental settings compared to existing privacy-preserving data synthesis mechanisms.

Note that the above baselines guarantee $(\varepsilon,\delta)$-Sample DP, while our proposed \method can provide stricter $\varepsilon$-DP.

\mypara{Experimental Settings}
For the label-level setting of \method, we set the error coefficient $c=0.01$ in the matching limit calculation. 
Moreover, we have a further discussion about the choice of $c$ in~\autoref{section:paramter_study_label_level}. 
For the sample-level setting of \method,
we set the coefficient $h=0.001$ in the matching limit calculation.
We also provide the impact of various $h$ in~\autoref{section:paramter_study_sample_level}.
For the allocation of privacy budget in sample-level setting, we set $\varepsilon_{11} = \varepsilon_{12} = 0.5 \varepsilon_1$, and $\varepsilon_1:\varepsilon_2:\varepsilon_3=0.1:0.7:0.2$.
We also explore the impact of different privacy budget allocation.

\mypara{Implementation}
We set the total privacy budget $\varepsilon$ ranges from $0.5$ to $4.0$. 
Regarding the number of neighbors $N$ in the counterfactual estimation,
we set $N=5$ 
in the experiments.
In addition,
we also provide the results of $N=\{1,3,7\}$ in~\autoref{subsec:appendix_end_to_end}.
We implement \method with Python 3.8, and all experiments are conducted on a server with Intel(R) Core(TM) i7-11700K @ 3.60GHz and 128GB memory.
We repeat experiment 10 times for each settings, and provide the mean and the standard variance.

\subsection{End-to-End Evaluation}
\label{subsec:end_to_end}

\begin{figure*}[!t]
\captionsetup[ploture]{labelformat = parens, labelsep = space, font = small,labelfont=bf}
\centering
\includegraphics[width=0.90\textwidth]{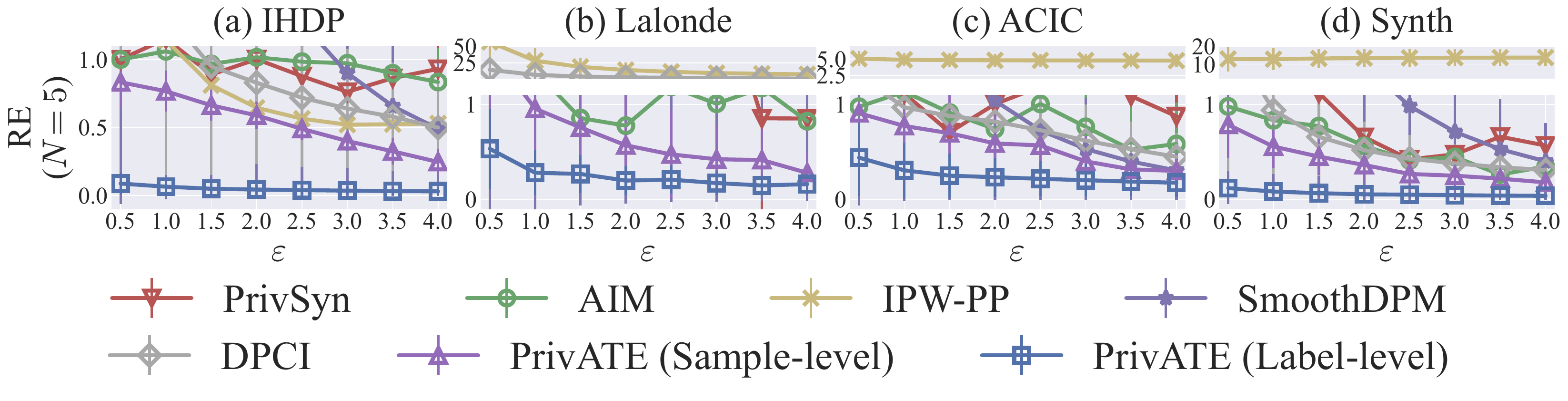}
\vspace{-0.3cm}
\caption{
End-to-end comparison of different methods when the number of matched neighbors $N$ is $5$.
In each plot, the x-axis denotes the privacy budget $\varepsilon$, and the y-axis denotes the relative error.}
\vspace{-0.3cm}
\label{fig:end_to_end}
\end{figure*}

In this section, we perform an end-to-end evaluation of the two levels of \method and the two types of competitors. 
\autoref{fig:end_to_end} illustrates the experimental results on four datasets.

In~\autoref{fig:end_to_end},
different columns represent various datasets.
We have the following observations.
First, as the privacy budget $\varepsilon$ increases,
the 
REs
of all approaches show a downward trend.
The reason is that the increase in the privacy budget reduces the noise intensity,
allowing these methods to capture the feature of the original dataset in a more accurate manner.
On this basis, a lower RE of ATE estimate can be obtained.

Second, the two levels of privacy protection schemes of \method significantly outperform baselines on all datasets.
Label-level privacy of \method performs the best, followed by sample-level privacy of \method.
For the real Lalonde dataset, the REs of most baselines are larger than $1$ even when the privacy budget is $3$, while \method achieves a low RE of less than $0.2$.
This emphasizes the superiority of \method in ATE estimation.
By carefully selecting the matching limit,
\method effectively strikes a balance between the noise perturbation and estimation error, thus obtaining a low ATE estimate error.
For label-level setting of \method, the privacy requirements are lower than those of sample-level,
thus the regression model training and similar sample matching in this setting are better,
making the REs small even when the privacy budget is small.
Sample-level privacy of \method needs to protect all types of variables,
thus the REs are higher than label-level when $\varepsilon$ is low.
As the privacy budget increases to a certain extent (\ie, $\varepsilon=4$), the performance of the two levels is similar.
Moreover, in~\autoref{subsec:appendix_end_to_end}, we find that when the number of matched neighbors $N$ takes different values,
\method shows consistent and similar performance,
which illustrates that \method is robust to the variation of $N$.

\ipw demonstrates the poorest performance across most datasets, particularly under small privacy budgets. 
On the one hand, \ipw requires a fixed threshold to constrain the sample weights in order to achieve bounded sensitivity. However, this approach lacks flexibility and interpretability. 
When the privacy budget is small, the injected noise becomes excessively large, leading to highly inaccurate ATE estimates. 
On the other hand, to ensure privacy protection, \ipw splits the original dataset, using one subset to learn the propensity score function and the other to estimate the ATE. This partitioning further compromises estimation accuracy. 
According to the results in~\autoref{fig:end_to_end}, \ipw fails to effectively handle varying datasets and realistic privacy constraints.

\smdp and \dpci fail to deliver satisfactory performance under small privacy budgets. 
Although these methods generally outperform \ipw,
the estimation accuracy remains limited when strict privacy guarantees are required.
When $\varepsilon \leq 1.5$, the REs of \smdp and \dpci are larger than or close to 1 on most datasets, which indicates poor performance.
This is primarily because they directly inject noise to the final ATE estimate, and smaller privacy budgets lead to higher noise intensity, thereby degrading accuracy. 
As $\varepsilon$ increases, the estimation errors of \dpci and \smdp gradually decrease.
Overall, these two method struggle to achieve low error under strong privacy requirements, which highlights the necessity of~\method.

The relative error of \syn is high, especially when the privacy budget $\varepsilon$ is small.
For instance,
when the privacy budget is $1$,
the REs of \syn exceed $1$ on the four datasets, and the RE on the Lalonde dataset even reaches $4$.
In contrast, 
the RE of \method is less than $1$ in all cases.
This is because \syn is designed to generate a new dataset that closes to the original dataset,
rather than specially designed for ATE estimation.
Due to the small number of samples in the Lalonde dataset and a large income gap between various individuals,
\syn performs poorly in capturing the true data distribution when the privacy budget is low,
resulting in a high RE.
We also notice that for the ACIC dataset,
the REs of \syn are around $1$ at various privacy budgets,
rather than decreasing as the budget increases.
The main reason is that the dimensionality of ACIC is high, which makes it challenging for \syn to capture data characteristics.
In addition, the error variance of \syn under various datasets and privacy settings is significantly higher than that of \method, 
which shows that \syn is not as stable as \method in this task.
 
The performance of~\aim is worse than~\method,
but better than \syn.
On the one hand,
\aim selects the key queries through adaptive and iterative mechanisms,
improving the quality of synthetic data.
On the other hand,
the goal of \syn and \aim is to generate a dataset similar to the original dataset. This type of method aims to achieve promising results on a variety of tasks, but not to achieve SOTA results on a specific task, such as ATE estimation. 
According to the experimental results, \aim and \syn cannot achieve promising performance under low privacy budgets. 
They experience some fluctuation in RE with increasing $\varepsilon$, but overall show a downward trend. 
In contrast, \method is carefully designed to balance noise-induced error and matching error in ATE estimation under DP protection.
Therefore, the performance of \method is better, and the trend in RE becomes more pronounced as $\varepsilon$ increases.

\subsection{Parameter Variation for Label-level Privacy}
\label{section:paramter_study_label_level}

\mypara{Choice of Matching Limit}
Recalling~\autoref{eq:cal_opt_k} and~\autoref{eq:round_clip_opt_k} in~\autoref{section:causal_effect_estimation},
we approximately estimate the total error caused by noise and matching,
and further calculate an optimized matching limit for each sample, which can adaptively vary with the privacy budget and the dataset.

In this section, we verify the rationality of our framework by comparing this adaptive calculation with the fixed value method.
In particular, we only modify the calculation of matching limit $k^*$ in the label-level setting of \method to fixed values (\ie, 1, 10 and 50), and keep the other parts unchanged.
\autoref{fig:compare_fixed_k} illustrates the performance of various matching limit determination mechanisms. 

From~\autoref{fig:compare_fixed_k},
we observe that the fixed value method is difficult to achieve great performance across all datasets.
Since the data characteristics and matching situations of different datasets are various, 
it is not suitable to set the same matching limit on all datasets and privacy budgets.
When the matching limit is set to a small value, a low RE usually be achieved when the privacy budget $\varepsilon$ is small.
As $\varepsilon$ increases, the impact of noise decreases, 
and a small matching limit cannot achieve a great result.
On the other hand, larger $k$ can perform better in high privacy budgets, but this approach will introduce significant errors when the noise intensity is high.
Moreover, for different datasets, the optimal matching limits under the same privacy budget are various, making the determination of the matching upper limit more challenging.

According to the results in~\autoref{fig:compare_fixed_k},
\method almost achieves great performance under various settings.
We find that the RE of the method with fixed small values $k$ does not decrease significantly with the growth of the privacy budget.
The reason is that a small matching limit makes the variation of noise intensity small,
and the matching error does not change under a fixed $k$.
Unlike the above,
as the privacy budget $\varepsilon$ increases,
the matching limit calculated by \method gradually increases,
making the RE decreases.
Even with a small privacy budget,
\method still shows promising performance on these datasets,
reflecting the superiority of adaptive calculation.
For the ACIC dataset, the sample size and dimensionality are both high, which makes the calculated matching limit large.
When $\varepsilon$ is small, the noise error plays a dominant role, making the RE of \method high.
With the increase of $\varepsilon$, \method still shows competitive performance.

\begin{figure*}[!t]
    \centering
    \includegraphics[width=0.90\textwidth]{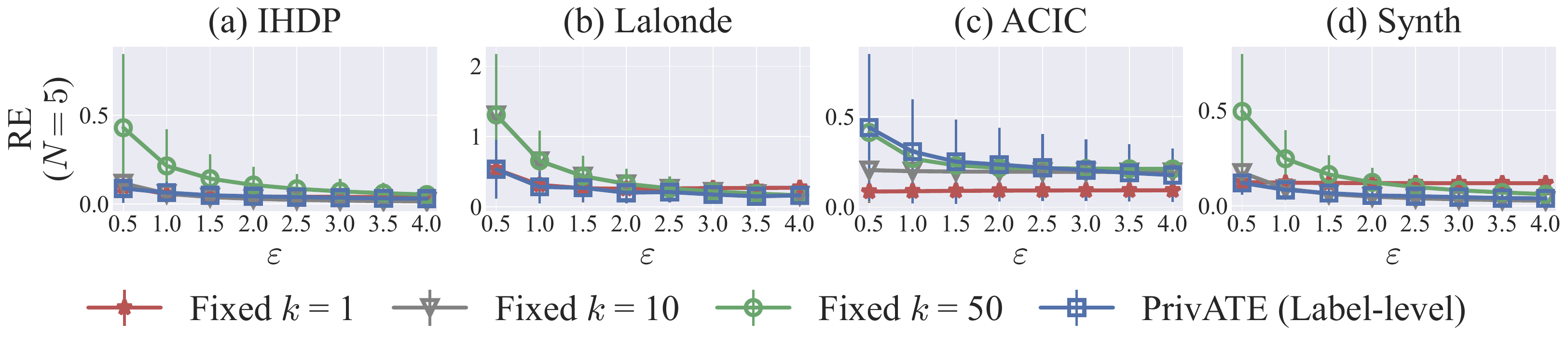}
    \vspace{-0.3cm}
    \caption{
    Impact of different matching limit determination mechanisms in the label-level privacy of \method when the number of matched neighbors $N$ is $5$. 
    The columns represent the used datasets.
    In each plot, the x-axis denotes the privacy budget $\varepsilon$, and the y-axis denotes relative error. 
    }
    \vspace{-0.3cm}
    \label{fig:compare_fixed_k}
\end{figure*}

\mypara{Impact of Error Coefficient in Matching Limit Calculation}
In~\autoref{eq:appro_bias}, 
we utilize an error coefficient $c$ to assist the bias estimate caused by matching in the label-level setting of~\method.
In this section, we explore the impact of various $c$ on final ATE estimation.
A suitable $c$ is crucial to reduce the estimation error.
If the value of $c$ is reasonable, the bias caused by matching can be estimated accurately, which helps to select an ideal matching limit.
However, if $c$ is too large or too small,
the estimation of matching error will be severely distorted, 
resulting in an inappropriate setting of the matching limit and inaccurate estimation of ATE. 

\begin{figure*}[!t]
    \centering
    \includegraphics[width=0.92\textwidth]{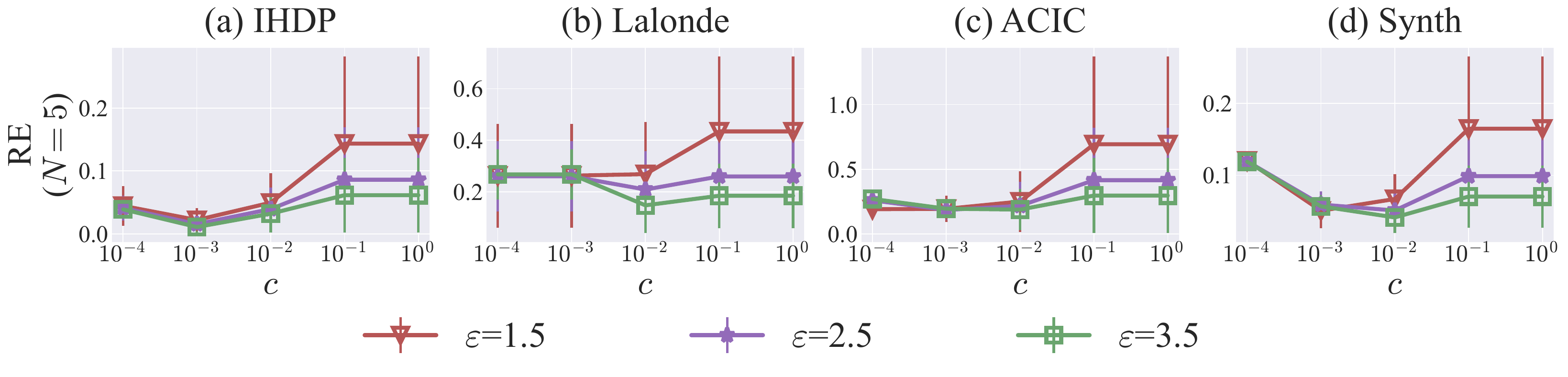}
    \vspace{-0.3cm}
    \caption{
    Impact of different error coefficients $c$ in the label-level privacy 
    when the number of matched neighbors 
    $N$ 
    is $5$. 
    The columns denote
    the used datasets.
    In each plot, 
    the x-axis denotes the error coefficient $c$,
    and the y-axis denotes relative error. 
    }
    \vspace{-0.3cm}
\label{fig:vary_label_c}
\end{figure*}

\autoref{fig:vary_label_c} illustrates the RE of ATE estimate when the number of matched neighbors $N$ is $5$.
We obtain the following observations.
First,
the impact of various $c$ on the final RE is significant.
If $c$ is too small, it means that the bias from matching is overestimated,
which will make the matching limit too low.
If the value of $c$ is too high, the influence of matching will be underestimated,
making the matching limit too large.
Second, for the same dataset,
the optimal value $c$ under various privacy budgets may be different.
The reason is that the noise perturbation and matching error under various privacy budgets are changing.
In addition,
the optimal $c$ for various datasets is also 
different.

In general, we find that \method achieves a great performance under various privacy budgets and datasets when $c=0.01$.
We believe that this setting can accurately characterize the matching error,
thus we adopt $c=0.01$ in our experiments.

\subsection{Parameter Variation for Sample-level Privacy}
\label{section:paramter_study_sample_level}

\mypara{Choice of Matching Limit}
Recalling~\autoref{eq:user_level_opt_k} and~\autoref{eq:user_level_round_clip_opt_k}
in~\autoref{section:causal_effect_estimation},
we calculate the matching limit for sample-level privacy similar to the computation of label-level privacy.
In this section,
we evaluate the performance of the adaptive calculation of \method and fixed matching limit (\ie, 1, 10, and 50) methods, 
which is similar to the comparison in label-level privacy.
\autoref{fig:compare_fixed_k_sample_level}  illustrates the related results.

We observe that the relative error for a large fixed matching limit is significantly higher than that for a small fixed matching limit.
The reason is that sample-level privacy allocates the total privacy budget across multiple phases to satisfy DP. 
As a result, the privacy budget available in the final phase is smaller. 
In addition, due to noise perturbation in both the regression model and the propensity score, the matching results are not entirely accurate. 
Consequently, increasing the matching limit has less impact compared to label-level privacy.

At the same time, setting the matching limit to a very small fixed value is not optimal. 
When noise has low interference in matching or the privacy budget is large enough, moderately increasing the matching limit can help reduce estimation error. 
According to the results in~\autoref{fig:compare_fixed_k_sample_level}, our method achieves promising performance in various settings.

\begin{figure*}[!t]
    \centering
    \includegraphics[width=0.90\textwidth]{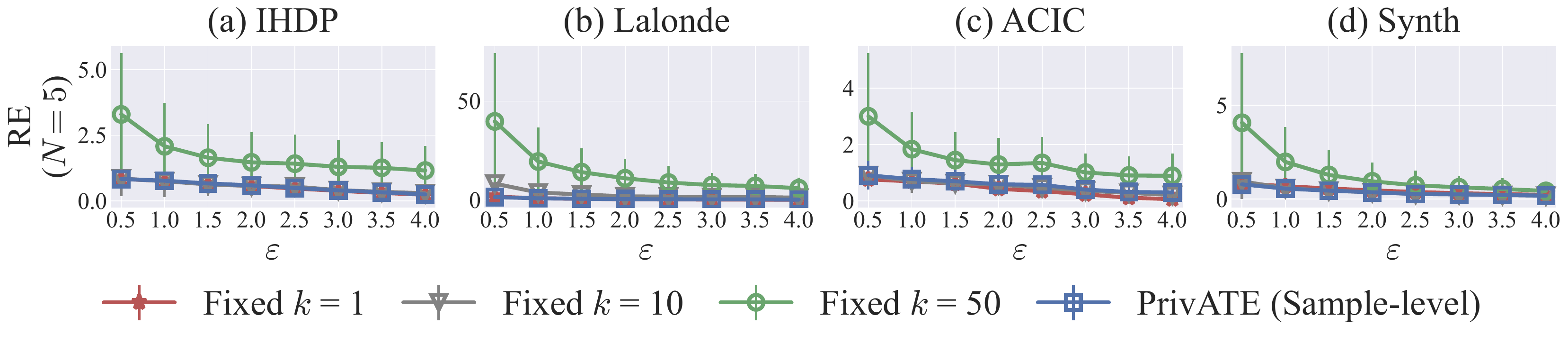}
    \vspace{-0.3cm}
    \caption{
    Impact of different matching limit determination mechanisms in the sample-level privacy of \method when the number of matched neighbors $N$ is $5$. 
    The columns represent the used datasets.
    In each plot, the x-axis denotes the privacy budget $\varepsilon$, and the y-axis denotes relative error. 
    }
    \vspace{-0.3cm}
    \label{fig:compare_fixed_k_sample_level}
\end{figure*}

\mypara{Impact of Error Coefficient in Matching Limit Calculation}
Recalling~\autoref{eq:user_level_opt_k} in~\autoref{section:causal_effect_estimation},
an error coefficient $h$ is utilized to help determine the value of matching limit in the sample-level privacy of \method.
In the section, we explore the influence of different $h$ on the ATE estimate.
\autoref{fig:vary_user_h} illustrates the relative errors of various $h$.

We observe that a larger $h$ tends to produce higher relative errors.
The reason is that the noise is injected into each phase in the sample-level setting,
thus the fidelity of regression model and matching results is significantly lower than that of the label-level.
In this case,
increasing $h$ cannot effectively reduce the matching error.
On the other hand,
the sensitivity will increase as $h$ grows,
making the noise error higher.
At the same time, a relatively small value of $h$ may not be a great choice since this cannot reduce the matching error.
Furthermore, it is impossible to find an $h$ that achieves the lowest RE for various privacy budgets.
Taking into account the impact of different budgets and datasets,
we choose to set $h=0.001$ in the experiments.

\begin{figure*}[!t]
    \centering
    \includegraphics[width=0.90\textwidth]{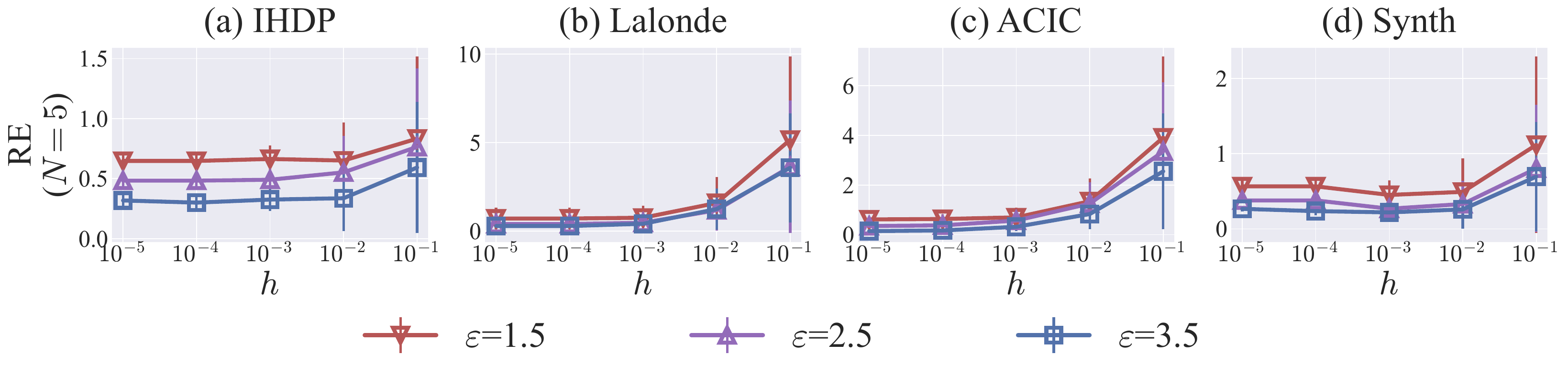}
    \caption{
    Impact of different error coefficients $h$ in the sample-level privacy 
    when the number of matched neighbors $N$ is $5$. 
    The columns 
    denote
    the used datasets.
    In each plot, 
    the x-axis denotes the error coefficient $h$,
    and the y-axis denotes relative error. 
    }
    \label{fig:vary_user_h}
\end{figure*}

\section{Discussion}
\label{sec:discussion}

\mypara{Generalization}
In this paper, our proposed \method mainly focuses on the binary treatments.
For more complex causal inference setups (\eg, multi-valued treatments),
the potential schemes are as follows:
A multinomial logit regression model can be applied for fitting. Then, the noise can be injected into the true probability vector. 
The multi-dimensional treatments can be perturbed by Generalized Randomized Response (GRR). 
Moreover, a distance metric function can be utilized to calculate the distance between different categories and achieve pairwise privacy-preserving ATE estimation.

\mypara{Scalability}
For \method, propensity score matching and adaptive matching algorithms involve non-trivial computation,
which may be computationally slow on large-scale datasets.
Here, we clarify that \method achieves a low computational complexity compared to existing solutions according to the complexity analysis in~\autoref{app:complexity_analysis}.
This suggests that \method has the potential to run on larger datasets.
Furthermore, to address the scalability bottlenecks that matching algorithms may encounter on large datasets, the efficiency of \method can be further improved by employing approaches such as approximate nearest neighbor (ANN) matching, hierarchical reduction matching, and GPU acceleration.

\section{Related Work}
\label{sec:related_work}

There are several literatures explore differentially private ATE estimation in observational studies~\cite{lee2019privacy,imbens2015causal,guha2025differentially,schroder2025private,ohnishi2024differentially,koga2024differentially,lebeda2025model}.
Specifically,
Guha~\etal~\cite{guha2025differentially} design a differentially private weighted average treatment effect estimator for binary outcomes by splitting the data into several disjoint groups.
Similarly, Lebeda~\etal~\cite{lebeda2025model} also propose a data splitting-based framework to estimate the average treatment effect.
Ohnishi~\etal~\cite{ohnishi2024differentially}
present a differentially private covariate balancing weighting estimator to infer causal effects while protecting the privacy of covariates.
Schr{\"o}der~\etal~\cite{schroder2025private} further propose a framework to estimate the ATE by doubly robust estimation and construct the confidence intervals.
In this setting, it is challenging to achieve promising performance under small privacy budgets, since the noise is directly injected to the ATE estimate.
In addition,
Lee~\etal~\cite{lee2019privacy} propose a privacy-preserving inverse probability weighting (IPW) method~\cite{imbens2015causal} to estimate the causal effect.
However, this approach relies on a pre-defined truncation threshold to bound the sample weights, which lacks flexibility and interpretability.
Koga~\etal~\cite{koga2024differentially} introduce a smooth-sensitivity-based DP algorithm to perturb the true average treatment effect. 
Nevertheless, it requires that the matching variables are discrete, and performs poorly under strong privacy constraints.

Moreover, some research incorporates differential privacy to protect real data in randomized experiments~\cite{kancharla2021robust,betlei2021differentially,ohnishi2025locally,farzam2024causal,mukherjee2024improving,kusner2016private}.
Kancharla~\etal~\cite{kancharla2021robust} investigate the problem of ATE estimation in randomized controlled trials. 
They assume a binary outcome space and propose two consistent estimators for estimating the ATE.
Betlei~\etal~\cite{betlei2021differentially} focus on privacy-preserving individual treatment effect (ITE) estimation and introduce a differentially private method, ADUM, which learns uplift models from data aggregated according to a given partition of the feature space.
Javanmard~\etal~\cite{javanmard2023causal} propose a differential privacy mechanism, CLUSTER-DP, which leverages the inherent cluster structure of the data to estimate causal effects, while perturbing the outcomes to preserve individual privacy.
Furthermore, Ohnishi~\etal~\cite{ohnishi2025locally} develop a method for inferring causal effects from locally privatized data in randomized experiments.
In addition,
Niu~\etal~\cite{niu2022differentially} introduce a meta-algorithm for estimating conditional average treatment effects using DP-EBMs~\cite{nori2021accuracy} as the base learner.
Schr{\"o}der~\etal~\cite{schroder2025differentially} further propose a framework for conditional average treatment effects estimation that is Neyman-orthogonal.

In addition, differentially private data synthesis can also be used for differentially private ATE estimation~\cite{zhang2021privsyn,mckenna2022aim,mckenna2019graphical,zhang2017privbayes,vietri2020new,cai2021data}.
In this way,
the ATE estimate can be calculated based on a synthetic dataset that satisfies DP.
Zhang~\etal~\cite{zhang2021privsyn} design a new method to automatically and privately identify correlations in the data,
and then generate sample data from a dense graphic model.
McKenna~\etal~\cite{mckenna2022aim} propose a workload-adaptive algorithm
that first selects a set of queries, then privately measures those queries, and finally generates synthetic data from the noisy measurements.
However, these approaches are essentially different from our work:
Their goal is to generate a synthetic dataset that closely resembles the original one under DP~\cite{yuan2023privgraph,yuan2025privload,DHZFCZG23,wang2023privtrace,yuan2024psgraph},
while our focus is on accurately estimating ATE 
while satisfying DP.
The experimental results also demonstrate the superiority of our method.

Moreover, there are also some other privacy-preserving solutions (\eg, $k$-anonymity, secure multi-party computation) that can be used for data protection.
For instance,
Abadi~\etal~\cite{abadi2016deep} propose DP-SGD, which is a classic method to train a model while ensuring the privacy of training samples.
Davidson~\etal~\cite{davidson2022star} design a practical mechanism named STAR for providing cryptographically-enforced $k$-anonymity protections.
Furthermore,
Shamsabadi~\etal~\cite{shamsabadi2025nebula} present Nebula, a system for differentially private histogram estimation on data distributed among clients.
Although these methods cannot be directly applied to differentially private ATE estimation, their underlying ideas offer valuable insights for future research.

\section{Conclusion}
\label{sec:conclusion}

In this paper, we propose a practical framework~\method for estimating the average treatment effect (ATE) for observational data under differential privacy (DP).
Based on propensity score matching,
two different levels (\ie, label-level and sample-level) of privacy protection approaches in~\method are proposed to accommodate various privacy requirements.
To strike a great trade-off between noise and matching errors,
\method achieves an adaptive matching limit determination by considering the joint influence caused by noise perturbation and matching inaccuracy.
Extensive experiments on four datasets demonstrate the superiority of our proposed~\method.
We further verify the effectiveness of matching limit determination.
We also analyze the impact of hyper-parameters of~\method and provide the guideline for their selection.

\section*{Ethics Considerations}
This paper focuses on differentially private average treatment effect estimation.
We strictly followed ethical guidelines by using publicly available, open-source datasets, under licenses that permit research and educational use.
As these datasets were curated and released by third parties, direct informed consent was not applicable. 
However, we are committed to ethical data use and will
comply with all licensing terms for any future modifications
or redistribution.

\section*{Acknowledgment}
We would like to thank the anonymous reviewers for their insightful comments. 
This work is supported 
in part 
by the National Natural Science Foundation of China under Grants No. (62402431, 62441618, 62025206, U23A20296, U24A20237, 62402379, 72594583011, 7257010373),
Key R\&D Program of Zhejiang Province under Grants No. (2024C01259, 2025C01061, 2024C01065, 2024C01012, 2025C01089),
the project CiCS of the research programme Gravitation which is (partly) financed by the Dutch Research Council (NWO) under Grant 024.006.037,
the China Postdoctoral Science Foundation under Grants No. (2025M771501, BX20250380),
and Zhejiang University.

{
    \footnotesize
    \bibliography{easy}
    \bibliographystyle{IEEEtran}
}

\appendix

\subsection{Workflow of~\method}
\label{sec:appendix_overall_workflow}

\autoref{algorithm:put_together} illustrates the overall process of~\method.
In the label-level privacy,
only the third phase needs to consume privacy budget to perturb the outcomes.
While in the sample-level privacy,
all three phases need to allocate the privacy budget to protect the privacy of all types of data.

\begin{algorithm}[!htbp]

        \caption{\method}
        
        \label{algorithm:put_together}
        \KwIn{Original dataset $D$,
        privacy level $l$,
        privacy budget $\varepsilon$ (label-level) or $\varepsilon=\varepsilon_{11}+\varepsilon_{12}+\varepsilon_2+\varepsilon_3$ (sample-level), the number of matched neighbors $N$, the maximum variation range of outcome $B$,  the error coefficient $c$ (label-level) or $h$ (sample-level)
        }
        \KwOut{
        Propensity score $e'(X)$, treatment $T'$, sorted matrices $H$,
        average treatment effect estimate $\hat\tau$}
        \If{l is label-level}
        {
        \Comment{Phase 1: Regression model training}
        $e'(X)\leftarrow$ \autoref{algorithm:phase_1}($D,l$)    \\
        \Comment{Phase 2: Similar sample matching}
        $T',H\leftarrow$\autoref{algorithm:phase_2}$(D,e'(X),l)$  \\
        \Comment{Phase 3: Causal effect estimation}
        $\hat \tau\leftarrow$\autoref{algorithm:phase_3}$(D,H,N,T',l,B,\varepsilon,c)$
        }
        \Else
        {
        \Comment{Phase 1: Regression model training}
        $e'(X)\leftarrow$ \autoref{algorithm:phase_1}($D,l,\varepsilon_{11},\varepsilon_{12}$) \\
        \Comment{Phase 2: Similar sample matching}
        $T',H\leftarrow$\autoref{algorithm:phase_2}$(D,e'(X),l,\varepsilon_2)$ \\
        \Comment{Phase 3: Causal effect estimation}
        $\hat \tau\leftarrow$\autoref{algorithm:phase_3}$(D,H,N,T',l,B,\varepsilon_3,h)$
        }
        
\end{algorithm}

\section{Detailed Proofs}
\label{sec:proof_appendix}

\subsection{Proof of \autoref{throrem:label_level}}

\begin{proof}
\label{proof:label_level_DP}

In the label-level setting, only the outcome is sensitive information.
    In the first two phases, the outcome does not need to be accessed, thus does not consume the privacy budget.
    In the third phase, the original potential outcomes of all samples are aggregated to calculate the potential outcome sum of the treated and control groups.
    We utilize Laplace mechanism to perturb the original aggregated outcome, thus this step satisfies $\varepsilon$-DP.
    Note that the samples of the treated
     and control groups are non-overlapping, thus these
    two parts can share the same privacy budget according to the parallel composition.
    Finally, the final ATE estimate is computed based on the perturbed aggregated outcome, which can be considered post-processing and does not consume the privacy budget.
    Therefore, if the privacy level $l$ is label-level,
    \autoref{algorithm:put_together} satisfies $\varepsilon$-Label~DP.

\end{proof}

\begin{lemma}
\label{lemma}
    Let $G(w)$ and $g(w)$ be two vector-based functions, which are continuous, and differentiable at all points. Moreover, let $G(w)$ and $G(w)+g(w)$ be $\lambda_1$-strongly convex in $L_1$-norm. 
    If $w_1=\arg\min_w G(w)$ and $w_2=\arg\min_w G(w)+g(w)$, then 
    \begin{equation*}
        ||w_1-w_2||_1 \leq \frac{1}{\lambda_1}\max_w||\nabla g(w)||_{\infty}
    \end{equation*}
    
\end{lemma}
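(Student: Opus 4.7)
The plan is to combine the first-order optimality conditions for $w_1$ and $w_2$ with the $\lambda_1$-strong convexity hypothesis, then close the argument with a Hölder-type inequality dual to the $L_1$ norm. First, I would write down that $\nabla G(w_1) = 0$ and $\nabla G(w_2) + \nabla g(w_2) = 0$, so that
\begin{equation*}
\nabla(G+g)(w_1) - \nabla(G+g)(w_2) = \nabla g(w_1).
\end{equation*}
This rewrite is the key algebraic step, since it reduces the comparison of two minimizers to the size of $\nabla g$ evaluated at a single point.

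Next, I would invoke $\lambda_1$-strong convexity of $G + g$ in $L_1$-norm. The standard way to exploit this is to take the defining inequality at $(w_1, w_2)$ and at $(w_2, w_1)$, and add them to cancel function values, yielding the monotonicity-type estimate
\begin{equation*}
\bigl(\nabla(G+g)(w_1) - \nabla(G+g)(w_2)\bigr)^\top (w_1 - w_2) \geq \lambda_1 \|w_1 - w_2\|_1^2.
\end{equation*}
Substituting the identity from the first step gives $(\nabla g(w_1))^\top (w_1 - w_2) \geq \lambda_1 \|w_1 - w_2\|_1^2$.

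Finally, I would bound the left-hand side from above using Hölder's inequality with the $L_1$–$L_\infty$ duality: $(\nabla g(w_1))^\top (w_1 - w_2) \leq \|\nabla g(w_1)\|_\infty \|w_1 - w_2\|_1$. Dividing through by $\lambda_1 \|w_1 - w_2\|_1$ (handling the trivial case $w_1 = w_2$ separately) and then replacing $\|\nabla g(w_1)\|_\infty$ by its supremum over $w$ gives the stated bound.

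The main obstacle is the proper handling of strong convexity in a non-Euclidean norm: many textbook derivations assume $L_2$-strong convexity and the corresponding $L_2$-monotonicity of the gradient. I would need to state carefully that $\lambda_1$-strong convexity in $\|\cdot\|_1$ means $F(y) \geq F(x) + \nabla F(x)^\top (y-x) + \tfrac{\lambda_1}{2}\|y-x\|_1^2$, and then derive the gradient-monotonicity inequality directly from this definition rather than citing a standard $L_2$ fact. Once that is in place, the Hölder step is immediate because $L_\infty$ is the dual norm of $L_1$, and the rest of the proof is routine.
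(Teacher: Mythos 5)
Your proof is correct and follows essentially the same route as the paper's: first-order optimality at the two minimizers, gradient monotonicity derived from $\lambda_1$-strong convexity in the $L_1$-norm, and H\"older's inequality via the $L_1$--$L_\infty$ duality. The only (immaterial) difference is that you exploit the strong convexity of $G+g$ and arrive at $(\nabla g(w_1))^\top(w_1-w_2)$, whereas the paper uses the strong convexity of $G$ and arrives at $(\nabla g(w_2))^\top(w_1-w_2)$; since the final bound is $\max_w\|\nabla g(w)\|_\infty$, both variants yield the identical conclusion.
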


\begin{proof}
    Using the definition of $w_1$ and $w_2$, and the fact that $G$ and $g$ are continuous and differentiable everywhere,
    \begin{equation*}
        \nabla G(w_1)-\nabla G(w_2)=\nabla g(w_2)
    \end{equation*}
    
    As $G(w)$ is $\lambda_1$-strongly convex, then
    \begin{align*}
        (\nabla G(w_1)-\nabla G(w_2))^T(w_1-w_2) &= (\nabla g(w_2))^T(w_1-w_2) \\ &\geq \lambda_1 ||w_1-w_2||_1^2
    \end{align*}

    Based on Hölder inequality, we have
    \begin{equation*}
        (\nabla g(w_2))^T(w_1-w_2) \leq ||\nabla g(w_2)||_{\infty}||w_1-w_2||_1
    \end{equation*}

    Then, we obtain 
    \begin{equation*}
        \lambda_1 ||w_1-w_2||_1^2 \leq ||\nabla g(w_2)||_{\infty}||w_1-w_2||_1
    \end{equation*}

    Further, we have
    \begin{equation*}
        \lambda_1 ||w_1-w_2||_1 \leq ||\nabla g(w_2)||_{\infty}
    \end{equation*}

    Finally, we can obtain
    \begin{equation*}
         ||w_1-w_2||_1 \leq\frac{1}{\lambda_1} \max_{w} ||\nabla g(w)||_{\infty}
    \end{equation*}
\end{proof}

\subsection{Proof of \autoref{throrem:user_level}}
\label{proof:user_level_DP}
\mypara{Proof 1: Private model training satisfies $\varepsilon_{11}$-DP}

\begin{proof}
Assuming that the dataset $D$ and the adjacent dataset $D'$ differ in the $i$-th sample, the difference between their corresponding loss functions is as follows:

\begin{align*}
    g(w) &= J(w,D)-J(w,D') \\
    &=\frac{1}{n}[ \log(1+e^{-{X}_i^T w t_i})-\log(1+e^{-{X'}_i^T w t'_i})]
\end{align*}

We observe that the logarithmic loss function $l=\log(1+e^{-X^Twt})$ is convex and differentiable, and $J(w)$ is $\lambda$-strongly convex. 
Then, we can obtain the derivative of $g(w)$ as follows:

\begin{equation*}
    \nabla g(w)= \frac{1}{n}[X_i^T l'(X_i^Twt_i)t_i-{X'}_i^T l'({X'}_i^T w t'_i) t'_i],
\end{equation*}
where $l'(\cdot)$ stands for the derivative of the logarithmic loss function, and its range is $[0,1]$.
Then, we obtain
\begin{equation*}
    ||\nabla g(w)||_{\infty}= \frac{1}{n}||X_i^T l'(X_i^Twt_i)t_i-{X'}_i^T l'({X'}_i^T w t'_i) t'_i||_{\infty}
\end{equation*}

For vector $v$, we have $||v||_{\infty}=\max_{j}{|v_j|}$.
Since $|l'(\cdot)t|\leq1$,
we further obtain
\begin{equation*}
    |\nabla g(w)_j|\leq\frac{1}{n}|X_{i,j}-X'_{i,j}|\leq\frac{1}{n}(|X_{i,j}|+|X'_{i,j}|)\leq\frac{2}{n}
\end{equation*}

Therefore, we can obtain
\begin{equation*}
    \max_{w}||\nabla g(w)||_{\infty}\leq \frac{2}{n}
\end{equation*}

It is obvious that $\frac{\lambda}{2}||w||^2_2$ is $\lambda$-convex in $L_2$-norm.
Next, we need to convert it to $L_1$-norm strongly convex.
For vector $v$, $||v||_1\leq\sqrt{d}||v||_2$,
then 
\begin{equation*}
     \lambda||w_1-w_2||_2^2 \geq \lambda\frac{1}{d}||w_1-w_2||_1^2
\end{equation*}

Therefore, the regularization term $\frac{\lambda}{2}||w||^2$ is $\frac{\lambda}{d}$-strongly convex in $L_1$-norm (\ie, $\lambda_1=\frac{\lambda}{d}$).
Based on Lemma~\ref{lemma}, we can obtain

\begin{equation*}
    ||w_1-w_2||_1 \leq \frac{1}{\lambda_1} \max_{w} ||\nabla g(w)||_{\infty} \leq \frac{d}{\lambda}\cdot \frac{2}{n} = \frac{2d}{n\lambda}
\end{equation*}

Therefore, the $L_1$-sensitivity of $w$ is $\frac{2d}{n\lambda}$.
The Laplace noise with the privacy budget of $\varepsilon_{11}$ is adopted to perturb the true weights,
thus the privacy model training satisfies $\varepsilon_{11}$-DP.
    
\end{proof}

\mypara{Proof 2: Private score calculation satisfies $\varepsilon_{12}$-DP}

\begin{proof}
    \label{proof:score_calculation}
    Different samples are independent of each other, thus the same privacy budget $\varepsilon_{12}$ can be used to add Laplace noise to their propensity scores. 
    According to the parallel composition introduced in~\autoref{section:preliminary_dp}, the step of private score calculation satisfies $\varepsilon_{12}$-DP.

\end{proof}

\mypara{Proof 3: Similar sample matching satisfies $\varepsilon_{2}$-DP}

\begin{proof}
    \label{proof:similar_sample_matching}
    For treatment $T$, the
    random response mechanism is applied to protect the sample's privacy.
    According to~\autoref{section:preliminary_dp}, the random response mechanism meets the requirements of DP.
    The distance calculation between various samples and distance sorting are based on the perturbed $T'$ and $e'(X)$.
    According to the post-processing property of DP,
    these steps do not incur additional privacy loss.
    Following the sequential composition of DP,
    the similar sample matching phase satisfies $\varepsilon_2$-DP.

\end{proof}

\mypara{Proof 4: Causal effect estimation satisfies $\varepsilon_{3}$-DP}

\begin{proof}
    \label{proof:causal_effect_estimation}

    The matching limit is calculated without touching the true data,
    thus it meets DP.
    The counterfactual outcome is perturbed by Laplace noise with the privacy budget of $\varepsilon_3$.
    The ATE estimation is finished based on the noisy outcomes.
    According to the sequential composition and post-processing of DP,
    causal effect estimation satisfies $\varepsilon_{3}$-DP.

\end{proof}

\mypara{Overall Privacy Budget}
According to the above proofs,
in the first phase,
the private regression model training satisfies $\varepsilon_{11}$-DP,
and the private propensity score calculation satisfies $\varepsilon_{12}$-DP.
In the second phase,
the similar sample matching satisfies $\varepsilon_2$-DP.
In the third phase,
the causal effect estimation satisfies $\varepsilon_3$-DP.
Based on the sequential composition of DP,
we obtain that if $l$ is set to sample-level, \autoref{algorithm:put_together} satisfies $\varepsilon$-Sample DP,
where $\varepsilon=\varepsilon_{11}+\varepsilon_{12}+\varepsilon_{2}+\varepsilon_{3}$.

\subsection{Proof of \autoref{throrem:error_bound_label}}
\label{proof_error_bound_label}

\begin{proof}
    Here, we provide the derivation for the outcome sum of the treated group $S_1$ (the derivation of the control group $S_0$ is similar).
    According to~\autoref{eq:expected_squared_error},
    let $\hat S$ denote the estimation of $S$,
    the expected square error $\mathbb{E}[(\hat S-S)^2]$ can be written as the summation of variance and the squared bias of $\hat S$, \ie, $\mathbb{E}[(\hat S-S)^2]=\mathsf{Var}[\hat S]+\mathsf{Bias}[\hat S]^2$.
    
    The variance of $\hat S$ comes from Laplace noise, given the maximum variation range of the outcome $B$, the matching upper limit for each sample $k$ and the privacy budget $\varepsilon$, we obtain the expected error of variance part is $\mathsf{Var}[\hat S]=2(\frac{(k+1)B}{\varepsilon})^2$. 

    The error of bias part comes from the  matching difference caused by whether the matching upper limit is applied.
    For each sample $j \in \{1,2,...,n_c \}$ in the control group, let 
    $u_j$  
    represents the number of times sample $j$ is selected as a neighbor in the original nearest neighbor matching.
    Then, we can obtain the total number of times $R=\sum_{j=1}^{n_c}\max(0,u_j-k)$ that neighbor samples are replaced when matching without the matching upper limit and when matching with the matching upper limit.
    Further, we obtain $\mathsf{Bias}[\hat S]^2=|\mathbb{E}(\hat S)-S|^2\leq(\frac{R}{N}B)^2$, where $N$ is the number of neighbors for each sample in the matching.

    Based on the above derivation, we can obtain:
    \begin{equation*}
        \mathbb{E}[(\hat S-S)^2]\leq2(\frac{(k+1)B}{\varepsilon})^2+(\frac{R}{N}B)^2
    \end{equation*}
\end{proof}

\subsection{Complexity Analysis}
\label{app:complexity_analysis}

In this section, we analyze the computational complexity of various methods, and quantitatively evaluate their running time and memory consumption. 

\mypara{Time Complexity}
We provide the time complexity by analyzing each phase of the algorithms. 
The number of samples is $n$, and the number of covariates is $d$.

For label-level privacy of \method,
the goal of the first phase is to train a logistic regression model.
The time complexity of model training is $\mathcal{O}(nd)$.
In the second phase, we need to calculate and sort the distance between the sample and other samples in the opposite treatment group,
$\mathcal{O}(n\log n)$.
In the third phase,
the counterfactual outcome of each sample is estimated and the potential outcomes are aggregated to compute the final ATE.
The time complexity is $\mathcal{O}(nN)$, where $N$ is the number of neighbors in the counterfactual estimation.
Above all, 
we obtain the total time complexity is $\mathcal{O}(nd+n\log n+nN)$.
For sample-level privacy of~\method,
additional noise injection will incur a time complexity of $\mathcal{O}(n)<\mathcal{O}(n\log n)$.
Therefore, the time complexity of sample-level privacy is also %
$\mathcal{O}(nd+n\log n+nN)$.

For \ipw, the time complexity of propensity score model training is $\mathcal{O}(n_1d)$, where $n_1$ is the number of samples used to train the model. 
The time complexity of differentially private ATE estimation phase is $\mathcal{O}(n_2d)$, where $n_2=n-n_1$ is the number of samples used to estimate ATE.
Since $\mathcal{O}(n_1d)+\mathcal{O}(n_2d)<\mathcal{O}(nd)$, the total time complexity of \ipw is $\mathcal{O}(nd)$.

For \smdp, the time complexity of matching is $\mathcal{O}(n)$,
and the time complexity of smooth sensitivity calculation is $\mathcal{O}(ng)$, where $g$ is the number of different discrete covariate combinations.
Therefore, the time complexity of \smdp is $\mathcal{O}(ng)$.

For \dpci, the time complexity of nuisance model fitting is typically $\mathcal{O}(nd)$.
The time complexity of ATE calculation is $\mathcal{O}(n)$.
Therefore, the total time complexity of \dpci is $\mathcal{O}(nd)$.

For \syn,
in the marginal selection step,
there are
$p=\frac{d(d-1)}{2}$ possible pairwise marginals.
In the $i$-th iteration of marginal selection algorithm,
$(p-i)$ pairwise marginals need to be checked;
thus the time complexity is $\sum_{i=1}^{m}(p-i)=mp-\frac{m(m+1)}{2}=\mathcal{O}(md^2)$,
where $m$ is the number of marginals.
In the dataset generation step,
\syn should go through all marginals $r$ times to ensure consistency.
Thus, the time complexity is $mr$,
and the overall time complexity is $\mathcal{O}(md^2+mr)$.

For \aim, all 1-way marginals are measured in the beginning,
and the time complexity is $\mathcal{O}(nd)$.
In the iterative stage,
the noise is injected into the selected $v$-way marginals in each round.
The corresponding time complexity is $\mathcal{O}(Tqnv)$, where $T$ is the number of iteration, and $q$ is the number of candidate queries in each round.
Therefore, the total time complexity of \aim is $\mathcal{O}(nd+Tqnv)$.

\mypara{Space Complexity}
For \method, it requires storing the covariate matrix and the propensity score, 
thus the space complexity of \method is 
$\mathcal{O}(nd)$.
For \ipw, the memory consumption mainly comes from the original dataset and model training,
and the corresponding space complexity is $\mathcal{O}(nd)$.
For \smdp, it requires storing the original dataset and the grouping information, thus the space time complexity is $\mathcal{O}(nd+g)$.
For \dpci, it mainly requires storing the dataset and nuisance models, thus the space complexity is $\mathcal{O}(nd)$.
For \syn, the memory consumption consist of two parts, \ie, marginal tables and synthetic
dataset.
The memory comsumption of marginal tables is the product of the number of marginals $m$ and the average number of cells for each marginal $C$,
and the memory consumption of synthetic dataset is $\mathcal{O}(nd)$.
Therefore, the space complexity of \syn is $\mathcal{O}(mC+nd)$.
For \aim,
it requires to store the original dataset with the space complexity of $\mathcal{O}(nd)$.
In the iteration process,
\aim needs to store the noise measurement values.
The space complexity is $\mathcal{O}(Th^v)$, where $h^v$ is the domain size for any $v$-way marginal. 
In addition, \aim also needs to storage the Private-PGM model parameters,
with the space complexity of $\mathcal{O}(S)$,
where $S$ is the junction tree size.
Therefore, the total space complexity of \aim is $\mathcal{O}(nd+Th^v+S)$.

\mypara{Empirical Evaluation}
\autoref{table:comparsion_running_time} and \autoref{table:comparsion_memory_consumption} show  the
running time and the memory consumption for all methods on the four datasets (see their details in \autoref{table:dataset_statistics}).

The empirical running time in \autoref{table:comparsion_running_time}
illustrates that the performance of differentially private ATE estimation methods is better than synthesis-based methods.
The running time of \ipw, \smdp and \method on the four datasets is smaller than $1$s, which reflects their high efficiency.
\method incurs longer runtime on the ACIC dataset due to its larger size, which increases the computational cost of both model training and distance sorting.
The running time of sample-level of \method is slightly higher than label-level since sample-level requires additional noise injection.
In addition, the running time of \syn is significantly longer than \method since it requires capturing the features of many marginals.
\aim exhibits the longest time due to a large number of iterations.

\autoref{table:comparsion_memory_consumption} shows the memory consumption.
The consumption of \method is the 
lowest.
We find that the consumption of \method, \ipw and \smdp is close because most of the same memory is used to store the original dataset.
The consumption of \aim is significantly higher than that of other methods since it requires to storage the selected $v$-way marginals and the junction tree.

\begin{table}[!t]
\caption{Comparison of computational complexity.}
    \centering
    \setlength{\tabcolsep}{0.8em}
    \begin{tabular}{c | c | c }
    \toprule
    \textbf{Methods} & 
    \textbf{Time} & 
    \textbf{Space}  
    \\
    \midrule
      \ipw  & $\mathcal{O}(nd)$ & $\mathcal{O}(nd)$
       \\
       \smdp  & $\mathcal{O}(ng)$ & $\mathcal{O}(nd+g)$
       \\
       \dpci & $\mathcal{O}(nd)$ & $\mathcal{O}(nd)$
       \\
       \syn  & $\mathcal{O}(md^2+mr)$ & $\mathcal{O}(mC+nd)$
       \\
       \aim & $\mathcal{O}(nd+Tqnv)$  & $\mathcal{O}(nd+Th^v+S)$
       \\
       \method (sample) 
       & %
       $\mathcal{O}(nd+n\log n+nN)$
       & $\mathcal{O}(nd)$ 
       \\
       
       \method (label)
       & %
       $\mathcal{O}(nd+n\log n+nN)$
       & $\mathcal{O}(nd)$ 
       \\
    \bottomrule
    \end{tabular}
    \label{table:comparsion_computation_complexity}
\end{table}

\begin{table}[!t]
\caption{Comparison of running time (measured by seconds).}
    \centering
    \setlength{\tabcolsep}{0.6em}
    \begin{tabular}{c | c | c| c | c }
    \toprule
    & \multicolumn{4}{c}{\textbf{Datasets}} \\
    {\textbf{Methods}} & IHDP  & Lalonde  &  ACIC &  Synth\\
     \midrule
        {\ipw}  & 0.29s & 0.16s  &  0.32s & 0.06s   \\
        {\smdp}  & 0.03s & 0.04s  &  0.06s & 0.02s   \\
        {\dpci}  & 0.09s & 0.04s  &  0.94s & 0.08s   \\
       {\syn}  & 67.02s & 5.18s  &  4861.72s & 8.11s   \\
       {\aim}  & 210.35s &  22.24s & 12569.72s  &  104.63s  \\
       {\method (sample)} & 0.08s & 0.04s & 0.68s & 0.05s  \\
       {\method (label)} & 0.06s  & 0.02s  & 0.55s  & 0.04s  \\
      \bottomrule
    \end{tabular}
    
    \label{table:comparsion_running_time}
\end{table}

\begin{table}[!t]
\caption{Comparison of memory consumption (measured by Megabytes).}

    \centering
    \setlength{\tabcolsep}{0.6em}
    \begin{tabular}{c | c | c | c | c }
    \toprule
    & \multicolumn{4}{c}{\textbf{Datasets}} \\
    {\textbf{Methods}} &  IHDP &   Lalonde &  ACIC &  Synth\\
     \midrule
       {\ipw}  & 471.13 &  470.92 &  493.67 &  472.70   \\
       {\smdp}  & 469.13 &  468.92 &  477.31 &  470.81   \\
       {\dpci} & 479.84 & 476.26 & 483.60 & 476.05   \\
       {\syn}  & 540.44 &  534.98 &  572.21 &  537.60   \\
       {\aim}  & 1712.74 & 933.76 &  4177.00 &  1043.12   \\
       {\method (sample)} & 470.60 & 469.97 & 484.55 & 470.92  \\
       {\method (label)} & 468.19 & 467.46 & 479.83 & 468.29  \\
      \bottomrule
    \end{tabular}
    \label{table:comparsion_memory_consumption}
\end{table}

\subsection{Dataset Description}

\label{appendix_datasets}
The details of the four datasets are as follows.

\begin{itemize}
\item \textbf{IHDP~\cite{hill2011bayesian}.} 
The Infant Health
and Development Program (IHDP) dataset is a semi-real dataset,
where only the outcome value is simulated.
The IHDP dataset is  constructed by combining real-world covariates from a randomized experiment with simulated assignments and outcomes. 
To create a challenging observational study, a selection bias was artificially introduced by removing a non-random subset (all non-white infants) from the treated group. 
The outcome variable is then synthetically generated based on predefined response surfaces, which model the relationship between covariates and the treatment effect.
This dataset includes $n=747$ individuals comprised of $n_t=139$ treated and $n_c=608$ control individuals.
The treatment involves specialist home visits for children, while the outcome is their future cognitive test scores.
The dimension of the covariates is $d=25$, including demographic information, infant health, socioeconomic status, \etc

\item \textbf{Lalonde~\cite{dehejia1999causal}.} 
The Lalonde dataset is a real dataset that comes from an evaluation study of the national supported work (NSW) program.
In particular, the experimental treated group from the NSW study (which received job training) is combined with a non-experimental control group drawn from observational surveys.
This dataset is composed of $n=445$ individuals, where $n_t=185$ individuals belong to the treated group and $n_c=260$ individuals belong to the control group. 
The treatment refers to whether or not to participate in the NSW program,
and the outcome is earning in
1978.
This dataset includes $d=8$ dimensions of covariates, such as the age, years of education, income in previous years, \etc

\item \textbf{ACIC~\cite{dorie2019automated}.}
ACIC comes from the Atlantic Causal Inference Conference competition in 2016 for causal challenges. 
The dataset used in this competition is semi-real, \ie, the covariates are real, while the treatment and outcome are synthetic. 
The treatment is assigned based on complex, pre-defined propensity score models, and the outcome is generated by response functions that incorporate heterogeneous treatment effects.
The covariates of ACIC come from the Collaborative Perinatal Project, which provides real medical data. Each sample contains $d=58$ covariates, including binary, categorical, and continuous variables. 
Treatment and outcome are generated by the organizer. 
In our experiment, there are a total of $n=4802$ samples, of which $n_t=858$ samples are in the treated group and $n_c=3944$ samples are in the control group.

\item \textbf{Synth~\cite{imbens2015causal}.}
This dataset is a synthetic dataset. 
We adopt the simulated method introduced in~\cite{koga2024differentially} to generate this dataset.
First, we simulate a covariate matrix.
A total of $n=1000$ samples are generated, each characterized by $d=20$ covariates.
These covariates are independently sampled from a uniform distribution.
To mimic the selection bias inherent in observational data, the treatment assignment 
$T$ is made dependent on the covariates.
We employ a logistic model where the propensity score for each sample is given by $e(X_i)=sigmoid(a\cdot(2X_i-1))$.
The parameter $a$, which controls the extent of selection bias, is drawn from the uniform distribution.
Next, the observed outcome 
$Y$ for each sample is synthesized using a linear response function: $Y_i=b \cdot X_i + \tau \cdot T_i+q_i$. Here, $b$ is a vector of coefficients sampled from a uniform distribution, representing the heterogeneous influence of each covariate on the outcome.
The scalar $\tau$ is set to $0.5$, which stands for the ATE estimate.
 $q_i$ is an independent noise term, also sampled from a uniform distribution, which introduces random variability into the outcome.
 After the above process, the treatment assignment yields a naturally imbalanced split, resulting in a final dataset with $n_t=489$ samples in the treated group and $n_c=511$ samples in the control group.

\end{itemize}

\section{Additional Results}
\label{section:additional_results}

\subsection{End-to-end Evaluation under Various $N$}
\label{subsec:appendix_end_to_end}

\autoref{fig:appendix_end_to_end} illustrates the relative errors under various numbers of matched neighbors in the counterfactual estimation.
We observe that
\method shows a consistent and similar performance,
which illustrates that \method is robust to the variation of $N$.

\begin{figure*}[!t]
\captionsetup[ploture]{labelformat = parens, labelsep = space, font = small,labelfont=bf}
\centering
\includegraphics[width=0.95\textwidth]{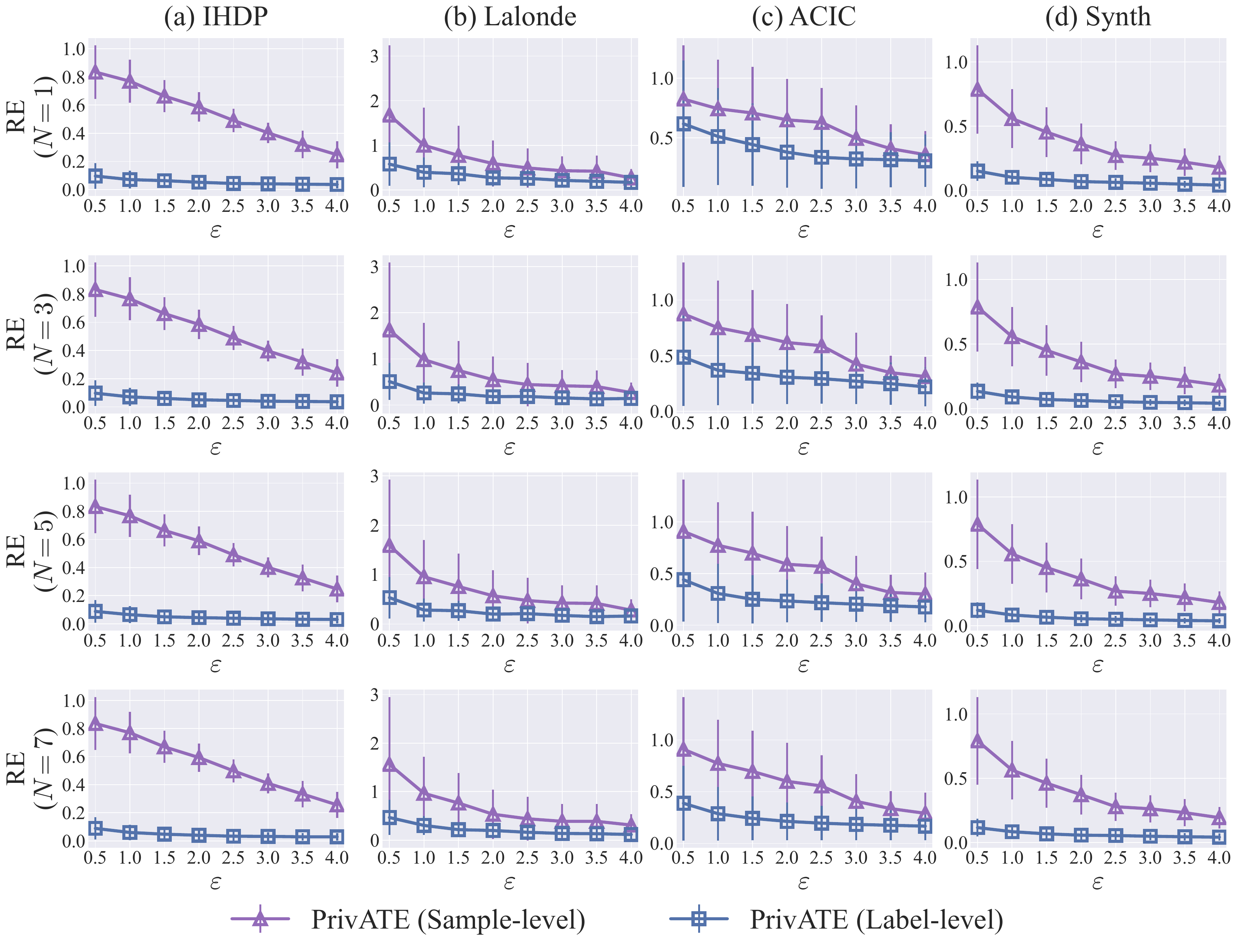}
\vspace{-0.1cm}
\caption{
{
End-to-end comparison of various numbers of matched neighbors.
The columns represent the used datasets, and the rows stand for different number of matched neighbors in the counterfactual estimation.
In each plot, the x-axis denotes the privacy budget $\varepsilon$, and the y-axis denotes the relative error.}}
\label{fig:appendix_end_to_end}
\end{figure*}

\subsection{Parameter Variation for Label-level Privacy}
\label{subsec:appendix_param_label}

\mypara{Choice of Matching Limit}
\autoref{fig:appendix_compare_fixed_k} illustrates 
the performance of various matching limit determination mechanisms for label-level  privacy under various $N$.
Similar to the analysis for choice of matching limit in~\autoref{section:paramter_study_label_level},
we find that the fixed value approach is difficult to obtain great performance on all datasets.
The adaptive determination mechanism of~\method can achieve low REs in most cases.

\begin{figure*}[!t]
    \centering
    \includegraphics[width=0.93\textwidth]{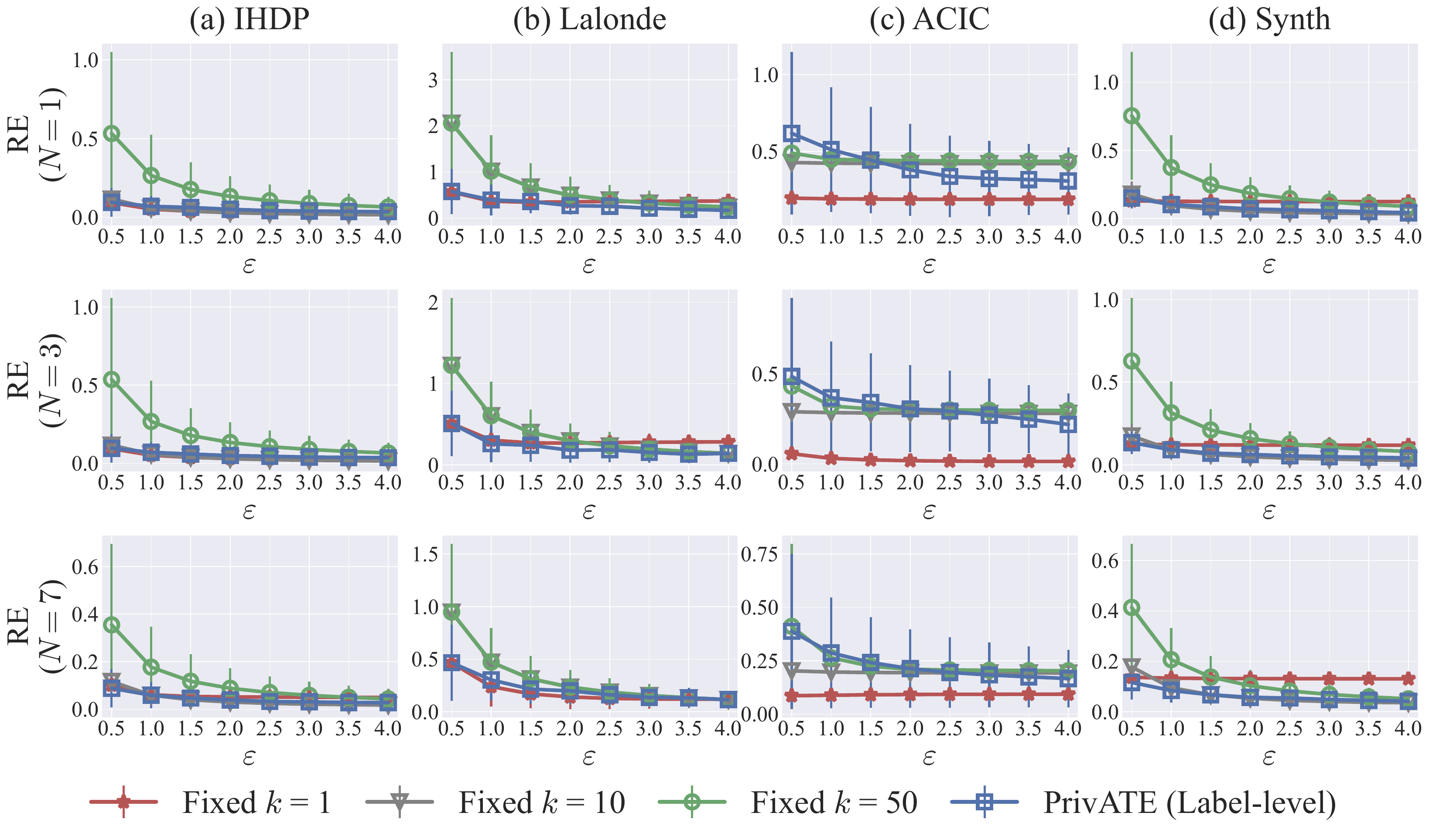}
    \vspace{-0.35cm}
    \caption{
    Impact of different matching limit determination mechanisms in the label-level privacy of \method. 
    The columns represent the used datasets, and the rows stand for different number of matched neighbors in the counterfactual estimation. 
    In each plot, the x-axis denotes the privacy budget $\varepsilon$, and the y-axis denotes relative error. 
    }
    \vspace{-0.25cm}
    \label{fig:appendix_compare_fixed_k}
\end{figure*}

\mypara{Impact of Error Coefficient in Matching Limit Calculation}
\autoref{fig:appendix_vary_label_c} illustrates 
the performance of various error coefficients in the label-level setting of~\method under various $N$.
Similar to the results in~\autoref{section:paramter_study_label_level},
we observe that $c=0.01$ can achieve promising performance across various settings.

\begin{figure*}[!t]
    \centering
    \includegraphics[width=0.93\textwidth]{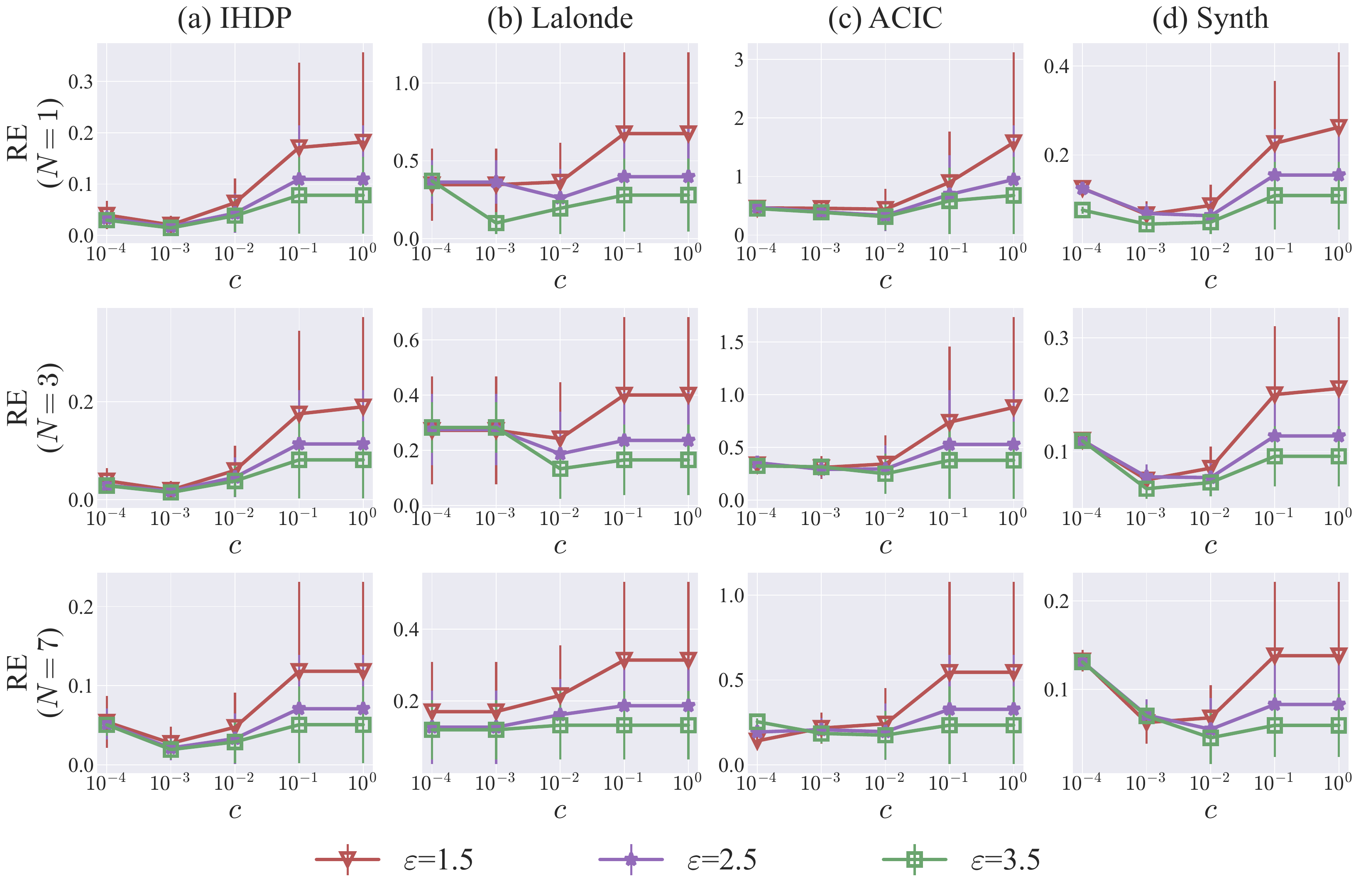}
    \vspace{-0.35cm}
    \caption{
    Impact of different error coefficients $c$ in the label-level privacy of \method. 
    The columns represent the used datasets, and the rows stand for different number of matched neighbors in the counterfactual estimation. 
    In each plot, 
    the x-axis denotes the error coefficient $c$,
    and the y-axis denotes relative error. 
    }
    \vspace{-0.25cm}
\label{fig:appendix_vary_label_c}
\end{figure*}

\subsection{Parameter Variation for Sample-level Privacy}
\label{subsec:appendix_param_sample}

\mypara{Choice of Matching Limit}
\autoref{fig:appendix_compare_fixed_k_sample} illustrates 
the performance of various matching limit determination mechanisms for sample-level privacy under various $N$.
Similar to the observation in~\autoref{section:paramter_study_sample_level},
we find that larger $k$ tends to cause significant errors due to inaccurate matching and obvious perturbation noise.
The adaptive determination mechanism of~\method can achieve promising performance in most cases.

\begin{figure*}[!t]
    \centering
    \includegraphics[width=0.93\textwidth]{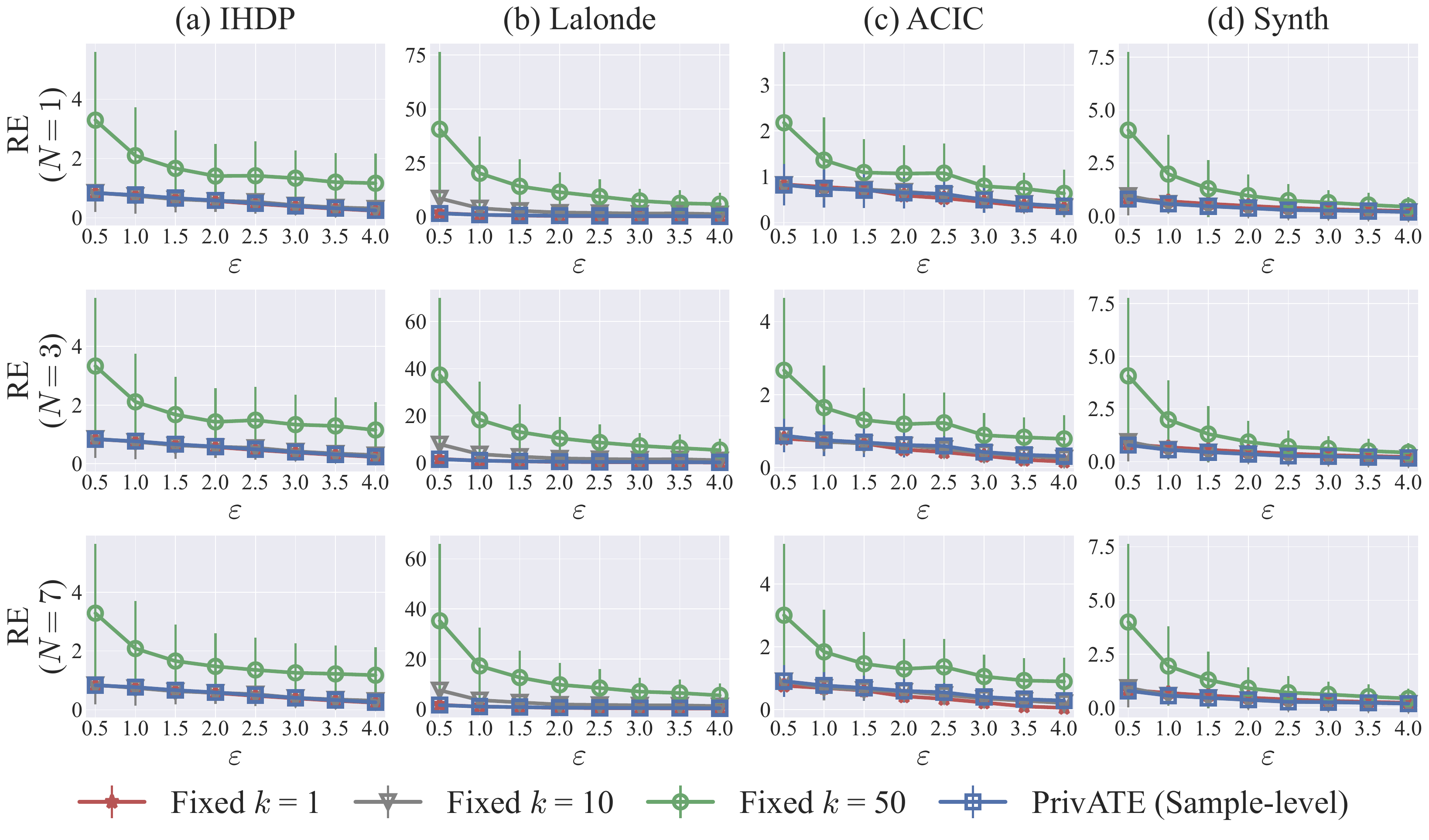}
    \vspace{-0.4cm}
    \caption{
    Impact of different matching limit determination mechanisms in the sample-level privacy of \method. 
    The columns represent the used datasets, and the rows stand for different number of matched neighbors in the counterfactual estimation. 
    In each plot, the x-axis denotes the privacy budget $\varepsilon$, and the y-axis denotes relative error. 
    }
    \vspace{-0.1cm}
    \label{fig:appendix_compare_fixed_k_sample}
\end{figure*}

\mypara{Impact of Error Coefficient in Matching Limit Calculation}
\autoref{fig:appendix_vary_user_h} provides 
the performance of various error coefficients in the sample-level setting of~\method under various $N$.
We can obtain the similar results to~\autoref{section:paramter_study_sample_level}.

\begin{figure*}[!t]
    \centering
    \includegraphics[width=0.93\textwidth]{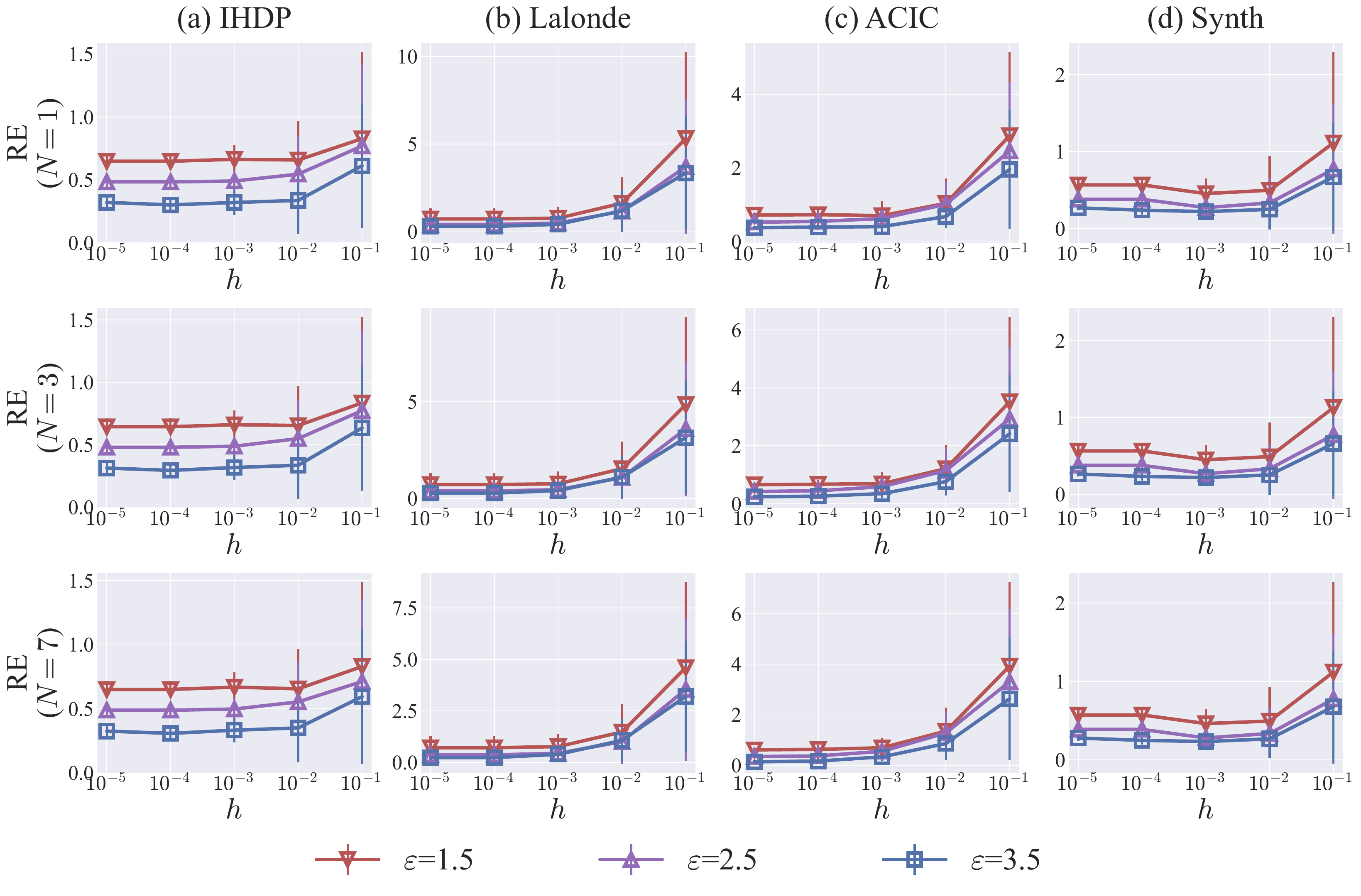}
    \vspace{-0.4cm}
    \caption{
    Impact of different error coefficients $h$ in the sample-level privacy of \method. 
    The columns represent the used datasets,
    and the rows stand for different number of matched neighbors in the counterfactual estimation. 
    In each plot, 
    the x-axis denotes the error coefficient $h$,
    and the y-axis denotes relative error. 
    }
    \vspace{-0.1cm}
    \label{fig:appendix_vary_user_h}
\end{figure*}

\mypara{Impact of Privacy Budget Allocation}
Recalling the sample-level setting in~\autoref{section:putting_things_together}, \method divides the entire privacy budget into three phases. 
In the section, we evaluate the impact of various privacy budget allocation on the four datasets.
\autoref{fig:vary_privacy_budget_allocation}
illustrates the REs of ATE estimation for different budget allocation strategies. 
We find that the allocation ratio has an obvious influence on the ATE estimate.
The 
observations are as follows.

\begin{figure*}[!t]
    \centering
    \includegraphics[width=0.96\textwidth]{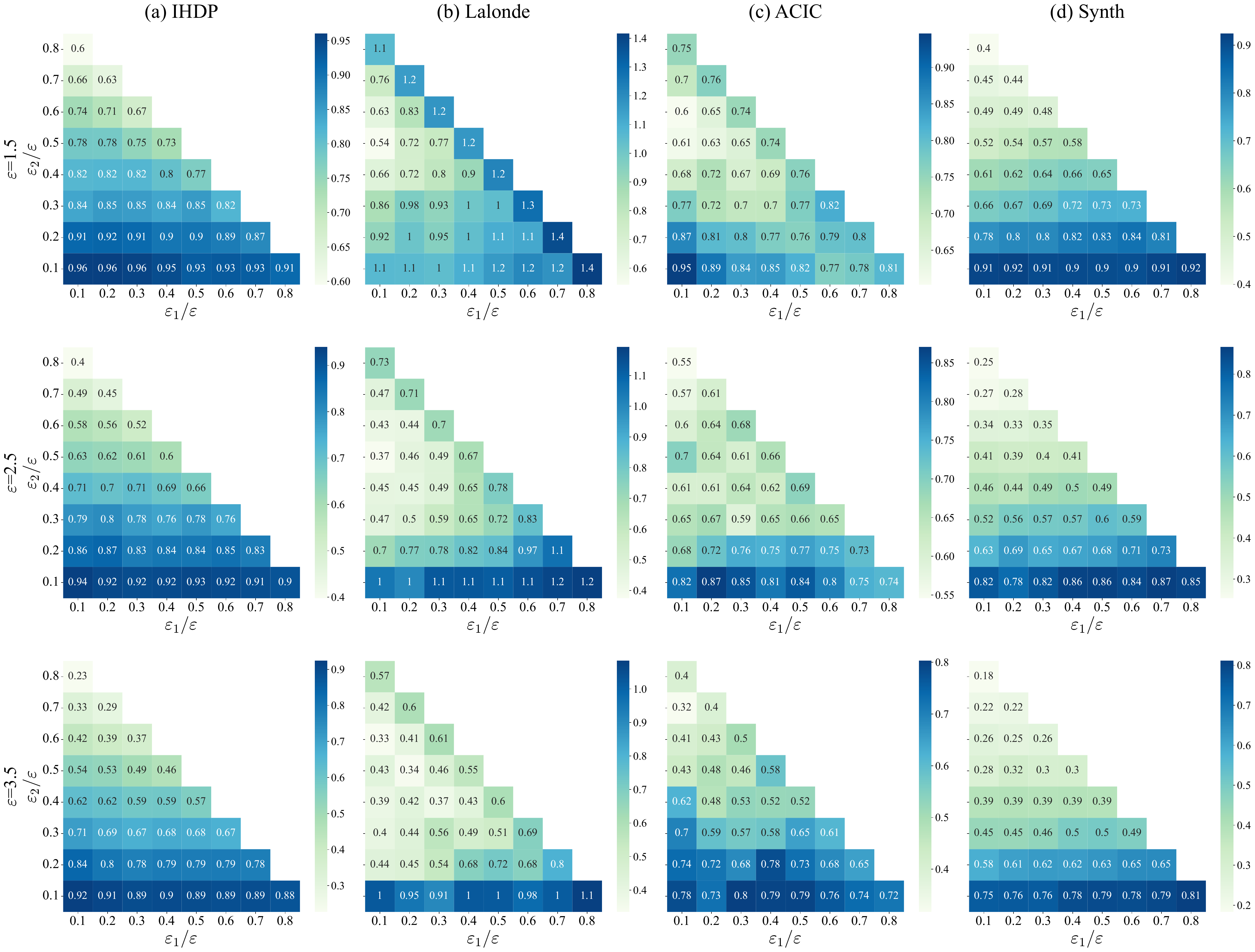}
    \vspace{-0.1cm}
    \caption{
    Impact of different privacy budget allocation in the sample-level privacy of \method when the number of matched neighbors $N$ is 5.
    The columns represent the used datasets, and the rows stand for different privacy budgets. 
    In each plot, the x-axis denotes the ratio of $\varepsilon_1$, the y-axis denotes the ratio of $\varepsilon_2$ ($\varepsilon_3=\varepsilon-\varepsilon_1-\varepsilon_2$), and the values in the grid denotes the relative error.} 
    
    \label{fig:vary_privacy_budget_allocation}
\end{figure*}

First, a small privacy budget ratio in the second phase will produce a high RE.
$\varepsilon_2$ is used to perturb  true treatment. 
If the proportion of $\varepsilon_2$ is small, the relevant estimates will exhibit obvious deviations when performing matching and counterfactual calculations.
Therefore, more privacy budget needs to be allocated to the second phase to achieve better estimation results.
In addition, the optimal division ratios under different privacy budgets are also various.
When the total privacy budget is small (\ie, $\varepsilon=1.5$), 
it is difficult to achieve great results in both regression model training and sample matching. 
In this case, increasing the ratio of $\varepsilon_1$ or $\varepsilon_3$ usually cannot effectively reduce the RE of ATE estimate.
As the privacy budget increases,
the role of each part becomes more obvious,
and the ratios of each phase needs to be carefully allocated to achieve a low RE.
Moreover, we also observe that the optimal privacy budget allocation ratios for different datasets are also various.
The number of samples in the Lalonde dataset 
is small and the range of outcome is large,
while the dimension of ACIC is quite high.
Therefore, it is significant to take into account the effects of three phases to achieve low REs for these two datasets.

Generally, we find that there is no optimal privacy budget allocation for all datasets and privacy budgets.
Considering the role of different phases,
we choose to set $\varepsilon_1:\varepsilon_2:\varepsilon_3=0.1:0.7:0.2$ for all experiments.
This setting shows low REs on various datasets and privacy budgets.

\end{document}